\newcommand{\XX}[1]{#1}
\newtheorem{theorem}{Theorem}
\newtheorem{lemma}[theorem]{Lemma}
\newtheorem{fact}{Fact}
\newtheorem{claim}[theorem]{Claim}
\newtheorem{corollary}[theorem]{Corollary}
\theoremstyle{remark}
\newtheorem{definition}[theorem]{Definition}
\newcommand{\amir}[1]{}
\newcommand{\R}{\mathbb R}
\newcommand{\D}{{\ensuremath{\mathcal{D}}}}
\newcommand{\intr}[1]{\forall \ #1 \ } 
\newcommand{\andop}[1]{\ensuremath{\mathop{\wedge}_{#1}}}
\newcommand{\G}{\mathcal{G}}
\newcommand{\U}{\ensuremath{\mathcal{U}}}
\newcommand{\E}{\mathbb E}
\newcommand{\hh}{\ensuremath{\mathcal{H}}}
\newcommand{\uh}{\ensuremath{\mathcal{U}}}
\newcommand{\gcr}{\ensuremath{\mathcal{G}}}
\newcommand{\gmr}{\ensuremath{\mathcal{G_{MR}}}}
\newcommand{\zo}{\ensuremath{\{0,1\}}}
\newcommand{\sig}{\ensuremath{\bar{\sigma}}}
\newcommand{\abs}[1]{\ensuremath{\left| #1 \right|}}
\newcommand{\rgta}{\rightarrow}
\newcommand{\Var}{\mathsf{Var}}
\newcommand{\poly}{\mathrm{poly}}
\newcommand{\nm}{\mathrm{L_1}}
\newcommand{\eps}{\varepsilon}
\newcommand{\mult}{\mathsf{mult}}
\newcommand{\prg}{{\ensuremath \mathsf{PRG}}}
\newcommand{\ind}{\ensuremath{\mathbbm{1}}}
\newcommand{\eat}[1]{}
\begin{document}

\makeatletter

\makeatother
\author{
\and Parikshit Gopalan\thanks{
Microsoft Research.
{\tt parik@microsoft.com}.}
\and
Amir Yehudayoff\thanks{Department of Mathematics, Technion--IIT.
{\tt amir.yehudayoff@gmail.com}.
Horev fellow -- supported by the Taub foundation.  
Research supported by ISF and BSF}
}

\title{Inequalities and tail bounds for elementary symmetric polynomials with applications}

\date{}
\begin{titlepage}
\thispagestyle{empty}
\maketitle

\begin{abstract}

We study the extent of independence needed to approximate the product
of bounded random variables in expectation, a natural question  that
has applications in pseudorandomness and min-wise independent hashing.

For random variables whose absolute value is bounded by $1$, we give
an error bound of the form $\sigma^{\Omega(k)}$ where $k$ is the
amount of independence and $\sigma^2$ is the total variance of the
sum. Previously  known bounds only applied in more restricted
settings, and were quanitively weaker. We use this to give a simpler
and more modular analysis of a construction of min-wise independent
hash functions and pseudorandom generators for combinatorial
rectangles due to Gopalan {\em et al.}, which also slightly improves
their seed-length. 
  
Our proof relies on a new analytic inequality for the elementary symmetric
  polynomials $S_k(x)$ for $x \in \R^n$ which we believe to be of
  independent interest. We show that if
  $|S_k(x)|,|S_{k+1}(x)|$ are small relative to $|S_{k-1}(x)|$ for
  some $k>0$ then $|S_\ell(x)|$ is also small for all $\ell > k$. 
  From these, we derive tail bounds for the elementary symmetric
  polynomials when the inputs are only $k$-wise independent.

\end{abstract}

\end{titlepage}

\section{Introduction}

The power of independence in probability and
randomized algorithms stems from the fact that it lets us
control expectations of products of random variables. If
$X_1,\ldots,X_n$ are independent random variables, then
$\E[\prod_{i=1}^nX_i] = \prod_{i=1}^n\mu_i$ where $\mu_i$ are their
respective means. However, there are numerous settings in computer
science, where true independence either does not hold, or is too
expensive (in terms of memory or randomness).
Motivated by this, we explore settings when approximate versions of
the product rule for expectations hold  even with limited independence.
Concretely, let $X_1,\ldots,X_n$ be random variables lying in the range
$[-1,1]$, with mean $\mu_i$ and variance $\sigma^2_i$ repectively.
We are interested in the smallest $k = k(\delta)$ such that whenever the $X_i$s are
drawn from a $k$-wise independent distribution $\D$,  
it holds that
\begin{align}
\label{eq:k-wise}
\left|\E_\D[\prod_{i=1}^n X_i] - \prod_{i=1}^n\mu_i \right| \leq \delta. 
\end{align}

As stated, we cannot hope to make do even with $k = n-1$. Consider the
case where each $X_i$ is a random $\{\pm 1\}$ bit. If $X_n =\prod_{i \leq
  n-1}X_i$, the resulting distribution is $(n-1)$-wise independent, but
$\E[\prod_iX_i] =1$, whereas it is $0$ with true independence. So,
clearly, we need some additional assumptions about the random
variables. 

The main message of this paper is that small total variance is
sufficient to ensure that the product rule holds approximately even under
$k$-wise independence.  

\begin{theorem}
\label{thm:intro-product}
Let $X_1,\ldots,X_n$ be random variables each distributed in the range
$[-1,1]$, with mean $\mu_i$ and variance $\sigma^2_i$ repectively. Let
$\sigma^2 = \sum_{i} \sigma_i^2$. 
There exist  constants $c_1 >1$ and $1 > c_2 > 0$ such that under any $k$-wise independent distribution $\D$, 
\begin{align}
\label{eq:k-wise-sigma}
\left|\E_\D[\prod_{i=1}^n X_i] - \prod_{i=1}^n\mu_i \right| \leq (c_1\sigma)^{c_2k}. 
\end{align}
Specifically, if $\sigma < 1/(2c_1)$ then $k = O(\log(1/\delta)/\log(1/\sigma))$-wise
independence suffices for Equation~\eqref{eq:k-wise}.  
\end{theorem}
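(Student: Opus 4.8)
The key is to reduce the product $\prod_i X_i$ to elementary symmetric polynomials, because the theorem's analytic core (mentioned in the abstract: smallness of $|S_k|, |S_{k+1}|$ relative to $|S_{k-1}|$ forces $|S_\ell|$ small for $\ell > k$) is naturally stated in those terms. Let me sketch the reduction.

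Let me think about how $\prod X_i$ connects to symmetric polynomials...
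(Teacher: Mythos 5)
Your proposal stops at the very first step: you state the intention to pass from $\prod_i X_i$ to elementary symmetric polynomials, but no argument is actually given, so there is nothing here that establishes the bound. The intended reduction is the right opening move (the paper writes $X_i = \mu_i(1+Z_i)$, so that $\prod_i X_i = \prod_i \mu_i \cdot \sum_{j} S_j(Z)$ with $\E[Z_i]=0$), but every substantive difficulty still lies ahead, and each one requires a specific idea. First, the substitution needs $\mu_i \neq 0$, and when $|\mu_i|$ is small the variance of $Z_i = X_i/\mu_i - 1$ blows up; the paper handles this by splitting off the set $H$ of coordinates with $|\mu_i| \le \sqrt{\sigma}$ (if $|H| \ge 2k$ the product is at most $\sigma^k$ under any $2k$-wise independent distribution and you are done; otherwise you condition on those coordinates and note the total variance of the remaining $Z_i$ is at most $\sigma$). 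Second, the series $\sum_j S_j(Z)$ is \emph{not} alternating since the $Z_i$ can be negative, so the Bonferroni-style truncation argument of \cite{EGLNV,Indyk} does not apply; you must instead truncate at degree $O(k)$ (which limited independence fools exactly) and control the tail $\sum_{\ell > 4k} S_\ell(Z)$ under $k$-wise independence. That tail control is the heart of the matter and is exactly what Theorem~\ref{thm:main} provides, via the new inequalities for symmetric polynomials (Theorems~\ref{thm:boundUsingS1E2} and~\ref{thm:GenBound}).

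Third, and subtly, the tail bound is only a high-probability statement, not a bound on $\E[|S_\ell(Z)|]$ --- and Section~\ref{sec:lower} shows an expectation bound is provably false under $(2k+2)$-wise independence. So you cannot simply take expectations of the tail; you must split on the good event $G$ where the tail is small, and bound the contribution of $\neg G$ via Cauchy--Schwarz/H\"older using a second-moment bound $\E[P_k(Z)^2] = O(1)$ on the truncated polynomial (Claim~\ref{claim:tech} in the paper). Without these three ingredients --- the split on small means, the tail bound under limited independence, and the good-event/second-moment argument --- the reduction to symmetric polynomials by itself proves nothing.
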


An important restriction that naturally arises is positivity, where each $X_i$ lies in the interval $[0,1]$. 
This setting of parameters (positive variables, small total variance) is important for the applications
considered in this paper: pseudorandom generators for combinatorial
rectangles \cite{EGLNV,LinialLSZ97} and min-wise independent permutations
\cite{BCFM}. The former is an important
problem in the theory of unconditional pseudorandomness
which has been studied intensively \cite{EGLNV,LinialLSZ97,SaksSZZ99,ArmoniSWZ96,Lu02,GMRTV}. 
Min-wise independent hashing was introduced by Broder et
al.~\cite{BCFM} motivated by similarity estimation, and
further studied by \cite{Indyk, BCM, SaksSZZ99}. \cite{SaksSZZ99}
showed that PRGs for rectangles give min-wise independent hash
functions.

The results of \cite{EGLNV,Indyk} tell us that under $k$-wise
independence, positivity and boundedness, 
the LHS of Equation~\eqref{eq:k-wise} is bounded by
$\exp(-\Omega(k))$, hence $k = O(\log(1/\delta))$ suffices for error
$\delta$. In contrast, we have seen that such a bound cannot hold in 
the $[-1,1]$ case.
However, once the variance is smaller than some
constant, our bound beats this bound even in the $[0,1]$ setting. Concretely, when $\sigma^2 < n^{-\eps}$ for some $\eps
>0$, our result says that $O(1)$-wise independence suffices for
inverse polynomial error in Equation~\eqref{eq:k-wise}, as opposed to
$O(\log(n))$-wise independence.  
This improvement is crucial in analyzing PRGs and hash functions in the polynomially small error regime.
A recent result of \cite{GMRTV} achieves near-logarithmic seed-length for both these problems,
even in the regime of inverse polynomial error. Their construction is
simple, but its analysis is not.
Using our results, we give a modular analysis of the pseudorandom
generator  construction for rectangles of \cite{GMRTV}, using the viewpoint of hash
functions. 

Our analysis is simpler and perhaps more intuitive.
It also improves the seed-length of the construction, getting the
dependence on the dimension $n$ down to $O(\log\log(n))$ as opposed to
$O(\log(n))$, which (nearly) matches a lower bound due to \cite{LinialLSZ97}.  
Given the basic nature of the question, we feel our
results might find other applications. Very recently, \cite{GKM15} constructed
the first pseudorandom generators with near-logrithmic seed-length
for several classes of functions including
halfspaces, modular tests and combinatorial shapes. The key technical
ingredient of their work is a generalization of Theorem \ref{thm:intro-product}
to the setting where each $X_i$ takes values in the unit complex disc. 

The main technical ingredient in our work is a new analytic inequality about 
symmetric polynomials in real variables which we believe is
independently interesting. The $k$'th symmetric polynomial in $a = (a_1,a_2,\ldots,a_n)$ is defined as
\begin{align}
\label{eq:sk} 
S_k(a) =   \sum_{T \subseteq [n]: |T|=k} \prod_{i \in T} a_i 
\end{align}
(we let $S_0(a) =1$). We show that for any real vector $a$, if $|S_k(a)|,|S_{k+1}(a)|$ are
small relative to $|S_{k-1}(a)|$ for some $k>0$, then $|S_\ell(a)|$ is
also small for all $\ell > k$. This strengthens and generalzies a
result of \cite{GMRTV} for the case $k=1$. 

We give an overview of the new inequality, its use in the
derivation of bounds under limited independence, and finally the application of these bounds
to the construction of pseudorandom generators and hash functions. 

\subsection{An inequality for elementary symmetric polynomials}

The elementary polynomials appear as coefficients of a 
univariate polynomial with real roots, since $\prod_{i \in [n]} (\xi +
a_i) = \sum_{k=0}^n \xi^k S_{n-k}(a)$. 
Symmetric polynomials have been well studied in
mathematics, dating back to classical results of Newton and
Maclaurin (see \cite{CS} for a survey).
This work focuses on their growth rates. Specifically, we study how local information on $S_k(a)$
for two consecutive values of $k$ implies global information for all larger values of $k$.

It is easy to see that symmetric polynomials over the real numbers have the following property:
\eat{
We may write
\begin{align}
\label{eqn:propOfSk}
p(\xi) = \prod_{i \in [n]} (\xi b_i + 1) = \sum_{k=0}^n \xi^k S_k(b).
\end{align}
The condition on the derivatives of $p$
is equivalent to $S_1(b) = S_2(b) = 0$,
and the following fact completes the argument.}
\begin{fact} 
\label{fact;1}
Over the real numbers, if $S_1(b)=S_2(b) = 0$ then $b=0$.
\end{fact}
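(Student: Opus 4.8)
The plan is to exploit the classical Newton identity relating power sums to elementary symmetric polynomials. Write $p_1 = \sum_i b_i$ and $p_2 = \sum_i b_i^2$ for the first two power sums. Newton's identity for $k=2$ states that $p_2 = p_1 S_1(b) - 2 S_2(b)$. Under the hypothesis $S_1(b) = S_2(b) = 0$, this immediately yields $p_2 = \sum_i b_i^2 = 0$. Since the $b_i$ are real, each term $b_i^2$ is nonnegative, so the vanishing of their sum forces $b_i = 0$ for every $i$, i.e.\ $b = 0$.

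Alternatively, and perhaps more self-containedly, one can avoid quoting Newton's identity by a direct algebraic manipulation: observe that
\begin{align}
\sum_{i} b_i^2 = \Bigl(\sum_i b_i\Bigr)^2 - 2\sum_{i<j} b_i b_j = S_1(b)^2 - 2 S_2(b).
\end{align}
Substituting the hypotheses $S_1(b)=0$ and $S_2(b)=0$ gives $\sum_i b_i^2 = 0$, and again nonnegativity of squares over $\mathbb{R}$ finishes the argument.

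There is essentially no obstacle here; the only thing to be careful about is that the conclusion genuinely uses the real structure (over $\mathbb{C}$ the statement is false, e.g.\ $b = (1, i, \text{suitable third root})$ type configurations, or more simply $n=2$ with $b_1 = 1, b_2 = -1$ gives $S_1 = 0$ but $S_2 = -1 \neq 0$; one needs $n \geq 3$ to get a complex counterexample, e.g.\ cube roots of a real number summing to zero in pairs of products). The punchline is that $\sum b_i^2 \geq 0$ with equality iff all $b_i = 0$, which is exactly where positivity of the reals is invoked. I would present the one-line computation $\sum_i b_i^2 = S_1(b)^2 - 2S_2(b) = 0$ as the proof.
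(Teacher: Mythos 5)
Your proof is correct and matches the argument the paper implicitly relies on: the identity $\sum_i b_i^2 = S_1(b)^2 - 2S_2(b)$ (Newton's identity for $E_2$), which the paper itself invokes at the start of the proof of Theorem~\ref{thm:boundUsingS1E2}, combined with the fact that a sum of real squares vanishes only when every term does. The one-line computation you propose is exactly the intended proof; the only blemish is the slightly muddled aside about complex counterexamples, which is tangential (the paper's own example is $p(\xi)=\xi^3+1$, whose three nonzero roots satisfy $S_1=S_2=0$).
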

This is equivalent to saying that if $p(\xi)$ is a real univariate polynomial
of degree $n$ with $n$ nonzero roots and $p'(0) = p''(0) = 0$ then $p \equiv 0$.
This does not hold over all fields, for example, the polynomial
$p(\xi) = \xi^3+1$ has three nonzero  complex roots and $p'(0) = p''(0) = 0$.

A robust version of Fact~\ref{fact;1} was recently proved in \cite{GMRTV}:
For every $a \in \R^n$ and $k \in [n]$,
\begin{align}
\label{eqn:GMR}
 |S_k(a)| \leq  \left(S^2_1(a) + 2|S_2(a)|\right)^{k/2} .
\end{align}
That is, if $S_1(a),S_{2}(a)$ are small in absolute value, then so is
everything that follows. We provide an essentially optimal bound.

\begin{theorem}
\label{thm:boundUsingS1E2}
For every $a \in \R^n$ and $k \in [n]$,
\[ |S_k(a)| \leq  \left(\frac{6 e (S^2_1(a) + |S_2(a)|)^{1/2}}{k^{1/2} }\right)^k .\]
\end{theorem}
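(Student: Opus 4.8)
The plan is to reduce the statement to a pair of quantities that behave multiplicatively under the natural structure of symmetric polynomials. Write $e_1 = S_1(a)$ and $e_2 = S_2(a)$, and set $p_1 = \sum_i a_i = e_1$ and $p_2 = \sum_i a_i^2$, the first two power sums. Newton's identity gives $p_2 = e_1^2 - 2e_2$, hence $p_2 \le e_1^2 + 2|e_2| \le 2(e_1^2 + |e_2|)$. The key observation is that $p_2$ is exactly $\|a\|_2^2$, a clean quantity that controls everything: I expect to prove $|S_k(a)| \le \binom{n}{k}^{1/2} \cdot \text{(something)}$ is too weak, so instead I would aim directly for a bound of the form $|S_k(a)| \le \frac{1}{k!}(\text{power sum data})$, using the fact that $k! \, S_k(a)$ expands as a signed sum of products of power sums indexed by partitions of $k$.

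The cleanest route I would try first: prove by a Maclaurin-type or generating-function argument that $|S_k(a)|^{1/k}$ is controlled by $\|a\|_2 / \sqrt{k}$ up to an absolute constant. Concretely, one has the elementary inequality $|S_k(a)| \le \frac{\|a\|_2^k}{k!} \cdot c^k$ for a universal $c$ — this should follow from the Newton--Girard expansion of $S_k$ in power sums $p_j$, bounding $|p_j| \le \|a\|_2^j$ (valid since $|p_j| = |\sum a_i^j| \le \sum |a_i|^j \le (\sum a_i^2)^{j/2} = \|a\|_2^j$ for $j \ge 2$, and $|p_1| \le \|a\|_2 \cdot \sqrt{n}$ is the problematic case — so I would need to be more careful with the $p_1$ terms, perhaps splitting off the $a_i$ with the variance argument, or using that the partition expansion weights $p_1^{m}$ by $1/m!$). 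Then Stirling gives $\frac{1}{k!} \le \left(\frac{e}{k}\right)^k$, and combining, $|S_k(a)| \le \left(\frac{c e \|a\|_2}{k}\right)^k \le \left(\frac{c' e (e_1^2 + |e_2|)^{1/2}}{k^{1/2}}\right)^k$, matching the claimed form after absorbing constants into the $6$.

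The main obstacle is the term $p_1 = e_1$: unlike $p_j$ for $j \ge 2$, it is not bounded by $\|a\|_2$, so a naive power-sum bound fails. The resolution should exploit that in the expansion $k!\,S_k = \sum_{\lambda \vdash k} \pm \frac{k!}{\prod_j m_j! \, j^{m_j}} \prod_j p_j^{m_j}$, every occurrence of $p_1 = e_1$ carries the hypothesis-controlled factor $|e_1| \le (e_1^2 + |e_2|)^{1/2}$, which is on the same footing as $\|a\|_2$, so substituting $|p_1|, \|a\|_2 \le R := \sqrt{2}(e_1^2+|e_2|)^{1/2}$ uniformly and summing the multinomial-type coefficients over all partitions of $k$ (which sum to something like $k!$ times a bounded factor, via the exponential generating function $\exp(\sum_j x^j/j) = 1/(1-x)$) yields $|S_k(a)| \le R^k / k!$ up to constants. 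I would need to verify the constant $6e$ is actually attainable, but a coarser constant suffices for all stated applications, so I would not optimize it in a first pass; the essentially-optimal value presumably comes from the extremal case $a = (t,t,\ldots,t)$ or $a$ supported on few coordinates, which I would check separately to pin down the exponent base.
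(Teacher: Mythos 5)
Your reduction to power sums has a genuine gap, and the quantitative target you set for yourself is actually false. You aim for an intermediate inequality of the form $|S_k(a)| \le c^k \|a\|_2^k/k!$, i.e.\ $(c e R/k)^k$ with $R = (S_1^2(a)+|S_2(a)|)^{1/2}$, which is \emph{stronger} than the theorem. The paper's own example $a_i = (-1)^i$ refutes it: there $S_1 = 0$, $|S_2| \approx n/2$, so $R \approx \sqrt{n/2}$, while $S_k(a) \approx (n/k)^{k/2} \approx (\sqrt{2}R/\sqrt{k})^k$, which exceeds $(cR/k)^k$ for all large $k$. The correct decay is $R^k/\sqrt{k!}$ (a factor $k^{-k/2}$), not $R^k/k!$; the $\sqrt{k}$ versus $k$ in the denominator is exactly the content of the theorem, and it is what the tail bounds downstream rely on.

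More importantly, the mechanism you propose --- expand $k!\,S_k$ over partitions $\lambda \vdash k$ with weights $1/z_\lambda$, bound $|p_j| \le \|a\|_2^j$ for $j \ge 2$ and $|p_1| = |S_1| \le R$, and sum --- cannot produce \emph{any} $k$-dependent gain. Since $\sum_{\lambda \vdash k} z_\lambda^{-1} = 1$ (your own generating function $\exp(\sum_j x^j/j) = 1/(1-x)$ shows the coefficients are $1$, not $1/k!$; the $1/k!$ you have in mind comes from $\exp(p_1 x)$ alone, and the higher power sums destroy it), this route yields only $|S_k(a)| \le (\sqrt{2}R)^k$ --- which is essentially the prior bound of Gopalan et al.\ (Equation~\eqref{eqn:GMR}) that the theorem is specifically improving. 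The obstruction is concrete: the single-part partition contributes $|p_k|/k$, and for $a$ supported on one coordinate $|p_k| = \|a\|_2^k$ exactly, so no term-by-term bound beats $(\sqrt{2}R)^k/k$. The actual smallness of $S_k$ in such cases comes from cancellation between partition terms (indeed $S_k(t,0,\ldots,0)=0$ for $k\ge 2$ even though every $p_\lambda = t^k$), and taking absolute values termwise forfeits this entirely. The paper instead proves the bound by induction on $k$, analyzing the extremum of the projective functional $S_k^2(a)/(S_1^2(a)+\sum_i a_i^2)^k$ with Lagrange multipliers and splitting on whether some coordinate is large --- precisely the device that exploits the tension between a large $|p_k|$ (concentrated mass) and the size of $S_k$. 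To salvage your approach you would need some substitute for this cancellation argument; uniform bounds on the $p_j$ will not suffice.
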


The parameters promised by Theorem~\ref{thm:boundUsingS1E2} are tight up to an exponential in $k$
which is often too small to matter
(we do not attempt to optimise the constants).
For example, if $a_i = (-1)^i$
for all $i  \in [n]$ then
$|S_1(a)| \leq 1$ and $|S_2(a)| \leq n+1$
but
$S_k(a)$ is roughly $(n/k)^{k/2}$.

A more general statement than Fact~\ref{fact;1} actually holds (see
Appendix \ref{app:proof} for a proof).
\begin{fact}
\label{fact:2}
Over the reals, if $S_k(a) = S_{k+1}(a) = 0$ for $k > 0$
then $S_\ell(a) =0$ for all $\ell \geq k$. 
\end{fact}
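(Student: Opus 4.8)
The plan is to reduce the statement to Fact~\ref{fact;1} via a differentiation/shift argument on the generating polynomial. Write $p(\xi) = \prod_{i \in [n]} (\xi + a_i) = \sum_{j=0}^n \xi^{n-j} S_j(a)$, so that $S_j(a)$ appears as the coefficient of $\xi^{n-j}$. The hypothesis $S_k(a) = S_{k+1}(a) = 0$ says that two consecutive coefficients of $p$ (those of $\xi^{n-k}$ and $\xi^{n-k-1}$) vanish, and we want to conclude that all coefficients of $\xi^{n-\ell}$ for $\ell \geq k$ vanish, i.e. that $p(\xi) = \xi^{n-k+1} q(\xi)$ for some polynomial $q$ — equivalently, that $0$ is a root of $p$ of multiplicity at least $n-k+1$, i.e. at least $n-k+1$ of the $a_i$ are zero.

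First I would dispose of the case where some $a_i = 0$: if exactly $m$ of the coordinates vanish, then $S_j(a) = S_j(a')$ where $a'$ is the vector of the $n-m$ nonzero coordinates, and $S_j(a') = 0$ once $j > n-m$; so it suffices to prove the statement when all $a_i \neq 0$, where the desired conclusion becomes simply $k > n$, i.e. the hypothesis is vacuous — in other words, I must show that if all $a_i$ are nonzero then one cannot have $S_k(a) = S_{k+1}(a) = 0$ for any $k \in [n]$ (note $S_n(a) = \prod a_i \neq 0$, so really $k \le n-1$, and if $k = n-1$ then $S_n \ne 0$ contradicts the hypothesis, so $k \le n-2$). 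Assuming all $a_i \neq 0$, consider instead the ``reversed'' polynomial $r(\xi) = \prod_{i \in [n]}(\xi a_i + 1) = \sum_{j=0}^n \xi^j S_j(a)$. Differentiating $r$ a total of $k-1$ times and evaluating the pattern of low-order coefficients, the conditions $S_{k-1}(a) \ne 0$ is not assumed, so instead I would work with the polynomial $r^{(k-1)}(\xi)/(k-1)!$ whose constant term is $S_{k-1}(a)$, whose $\xi$-coefficient is $k S_k(a)$, and whose $\xi^2$-coefficient is $\binom{k+1}{2} S_{k+1}(a)$. The hypothesis makes the $\xi$- and $\xi^2$-coefficients of this polynomial vanish.

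Now the key step: $r^{(k-1)}(\xi)/(k-1)!$ is, up to the standard interlacing property of derivatives of real-rooted polynomials, itself a polynomial with only real roots (this is classical — the derivative of a real-rooted polynomial is real-rooted, by Rolle's theorem, applied $k-1$ times, with care about roots at infinity since $r$ has degree exactly $n$ and leading coefficient $\prod a_i \ne 0$). A polynomial with only real roots whose expansion is $c_0 + c_1 \xi + c_2 \xi^2 + \cdots$ satisfies exactly the hypothesis of Fact~\ref{fact;1} after normalizing: if $c_1 = c_2 = 0$ then either the polynomial is a nonzero constant (impossible here since $r^{(k-1)}$ has degree $n - k + 1 \geq 1$ when $k \le n$) or it has a zero root, but then factor it out and recurse. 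More directly, I would apply Fact~\ref{fact;1} to the vector of roots $b$ of $r^{(k-1)}$: its $S_1(b)$ and $S_2(b)$ are proportional to $c_1/c_0$ and $c_2/c_0$ respectively (this needs $c_0 = S_{k-1}(a) \ne 0$), which vanish, forcing $b = 0$, i.e. $r^{(k-1)} \equiv c_0 \xi^{n-k+1}$; but a polynomial all of whose roots are $0$ arising as the $(k-1)$st derivative of $r$ forces $r$ itself to have a root of multiplicity at least $n-k+1$ at the single value where $r^{(k-1)}$ vanishes — and since $r(0) = 1 \ne 0$ that value is nonzero, so some $a_i$ repeats with high multiplicity; chasing this back, $\prod(\xi a_i + 1)$ would then be forced to be $(\xi a + 1)^n$ for a single value $a$, whose symmetric functions $S_{k-1}, S_k, S_{k+1}$ are all nonzero — a contradiction. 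The one case the Fact-\ref{fact;1} reduction does not directly cover is $S_{k-1}(a) = 0$ as well; then I would simply take the largest $j \le k-1$ with $S_j(a) \ne 0$ (such $j$ exists since $S_0 = 1$), note $S_{j+1} = \cdots = S_{k+1} = 0$ includes two consecutive zeros past a nonzero, and re-run the argument with $k$ replaced by $j+1$.

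The main obstacle I expect is making the ``root at infinity'' bookkeeping in the Rolle's-theorem/real-rootedness step fully rigorous: one must track that $r$ has degree exactly $n$ (guaranteed by $a_i \ne 0$) so that each differentiation lowers the degree by exactly one and no roots escape to infinity, and handle multiplicities correctly so that the interlacing argument delivers a genuinely real-rooted $r^{(k-1)}$ of the stated degree. Once that is in hand, the reduction to Fact~\ref{fact;1} (equivalently, to Theorem~\ref{thm:boundUsingS1E2} with the right-hand side equal to zero) is immediate. An alternative that sidesteps the derivative gymnastics entirely: apply Theorem~\ref{thm:boundUsingS1E2} directly to the sequence $S_k, S_{k+1}, \ldots$ by a ``reindexing'' — but since Theorem~\ref{thm:boundUsingS1E2} is stated for $S_1, S_2$, one would still need the observation that $(S_{k-1}, S_k, S_{k+1}, \ldots, S_n)$ are, up to sign and scaling, the elementary symmetric functions of the roots of $r^{(k-1)}$, which is the same computation; so the derivative viewpoint seems unavoidable and is the crux.
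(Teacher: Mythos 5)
Your proof is correct in its essential strategy, but it takes a genuinely different route from the paper's. The paper works with $p(\xi)=\prod_i(\xi+a_i)$, observes that $S_k(a)=S_{k+1}(a)=0$ means $\xi^2$ divides $p^{(n-k-1)}$, and then propagates multiplicity \emph{upward} using the classical fact that for a real-rooted polynomial, $\mult(p',y)\ge 2$ implies $\mult(p,y)\ge\mult(p',y)+1$; applying this $n-k-1$ times gives $\mult(p,0)\ge n-k+1$ directly, i.e.\ $S_n(a)=\cdots=S_k(a)=0$, with no case analysis. You instead reduce to the case where all $a_i$ are nonzero and $S_{k-1}(a)\neq 0$ (handling the degenerate cases separately, which you do correctly), pass to the reversed polynomial $r(\xi)=\prod_i(\xi a_i+1)$, differentiate $k-1$ times, and invoke Fact~\ref{fact;1} on the inverse roots $b$ of $r^{(k-1)}$ to reach a contradiction. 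This is exactly the mechanism behind the paper's proof of Theorem~\ref{thm:GenBound}, so your route has the virtue of unifying Fact~\ref{fact:2} with that reduction; the cost is the extra bookkeeping (zero coordinates, the largest $j$ with $S_j\neq 0$, degree/Rolle accounting) that the paper's multiplicity argument avoids entirely.

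One passage needs fixing. After Fact~\ref{fact;1} forces $b=0$ you write ``i.e.\ $r^{(k-1)}\equiv c_0\xi^{n-k+1}$'' and then chase multiplicities of a nonzero repeated root of $r$. This conflates the roots of $r^{(k-1)}$ with their negative reciprocals: the identities $S_1(b)=c_1/c_0$ and $S_2(b)=c_2/c_0$ hold for the parametrization $r^{(k-1)}(\xi)=c_0\prod_i(\xi b_i+1)$, in which $b=0$ means $r^{(k-1)}$ is the \emph{constant} $c_0$, not $c_0\xi^{n-k+1}$. The fix is immediate and makes the subsequent tangent unnecessary: each $b_i=-1/\rho_i$ is nonzero because $r^{(k-1)}(0)=(k-1)!\,S_{k-1}(a)\neq 0$, so $b=0$ already contradicts the existence of at least one factor, i.e.\ $n-k+1\ge 1$ (equivalently, $\deg r^{(k-1)}\ge 1$ contradicts $r^{(k-1)}$ being constant). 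With that correction the argument goes through.
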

We prove a robust version of this fact as well:
A twice-in-a-row bound on the increase of the symmetric functions
implies a bound on what follows.

\begin{theorem}
\label{thm:GenBound}
For every $a \in \R^n$, if $S_k(a) \neq 0$ 
and 
\begin{align*}
\left|{k+1 \choose k}\frac{S_{k+1}(a)}{S_k(a)} \right| & \leq C
\ \ \text{and} \ \
\left| {k+2 \choose k}\frac{S_{k+2}(a)}{S_{k}(a)} \right| \leq
C^2
\end{align*}
then for every $1 \leq h \leq n -k$,
\begin{align*}
\left|{ k + h \choose k} \frac{S_{k+ h}(a)}{S_k(a)}\right| \leq \left(\frac{6eC}{h^{1/2}}\right)^h .
\end{align*}
\end{theorem}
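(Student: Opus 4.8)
The plan is to deduce Theorem~\ref{thm:GenBound} from Theorem~\ref{thm:boundUsingS1E2} by a reduction that "shifts" the symmetric functions so that the $k$-th one plays the role of $S_0$. Observe first that the three hypotheses and the conclusion are all invariant under scaling $a \mapsto \lambda a$, so we may normalize, and also that they only involve $S_k, S_{k+1}, \dots$; the natural object is therefore the truncated generating polynomial. Write $P(\xi) = \prod_{i\in[n]}(\xi + a_i) = \sum_{j=0}^{n}\xi^{n-j}S_j(a)$, and consider the "tail" obtained by differentiating $k$ times at $0$, or equivalently the polynomial whose coefficients are $S_k(a), S_{k+1}(a), \dots, S_n(a)$. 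The key point I want to exploit is that this tail is again (up to normalization) the elementary-symmetric-polynomial sequence of \emph{some} real vector $b \in \R^{n-k}$ — this is essentially the content of Fact~\ref{fact:2}: real-rootedness is preserved under differentiation (Rolle), so $P^{(k)}$ has only real roots, and after dividing out the $\xi$-power its nonzero roots are the reciprocals of the entries of the desired $b$. Concretely, one sets things up so that
\begin{align*}
{k+h\choose k}\frac{S_{k+h}(a)}{S_k(a)} = S_h(b) \quad \text{for all } h \geq 0,
\end{align*}
with $S_0(b)=1$; the binomial factor ${k+h\choose k}$ is exactly what appears when one renormalizes the $k$-th derivative of $P$, which is why the theorem is stated with those factors.

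Granting this translation, the hypotheses become $|S_1(b)| \le C$ and $|S_2(b)| \le C^2$, hence $S_1^2(b) + |S_2(b)| \le 2C^2$, and the conclusion we want is $|S_h(b)| \le (6eC/h^{1/2})^h$ for all $h \le n-k$. This is \emph{almost} exactly Theorem~\ref{thm:boundUsingS1E2} applied to $b$, which gives $|S_h(b)| \le \big(6e(S_1^2(b)+|S_2(b)|)^{1/2}/h^{1/2}\big)^h \le (6e\cdot\sqrt{2}\,C/h^{1/2})^h$; so either one absorbs the $\sqrt 2$ into the (unoptimized) constant, or one notes the problem statement's constant already has slack. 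So the real work is entirely in Step~1: making the shift rigorous.

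The main obstacle is precisely justifying that the shifted sequence $\big({k+h\choose k}S_{k+h}(a)/S_k(a)\big)_{h\ge 0}$ is genuinely realizable as $(S_h(b))_h$ for a real vector $b$ — i.e. that after $k$-fold differentiation the resulting polynomial still has all real roots and one can read off a valid $b$. The cleanest route is: let $p(\xi) = \prod_i(1+a_i\xi) = \sum_j \xi^j S_j(a)$ (the "reversed" normalization, with $S_0=1$ the constant term); then $p$ has only real roots (at $-1/a_i$), so by Rolle's theorem every derivative $p^{(k)}$ has only real roots as well, and $p^{(k)}(0) = k!\,S_k(a) \ne 0$ by hypothesis. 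Hence $q(\xi) := p^{(k)}(\xi)/p^{(k)}(0)$ is a real-rooted polynomial of degree $n-k$ with $q(0)=1$, so $q(\xi) = \prod_{j}(1+b_j\xi)$ for a real vector $b\in\R^{n-k}$, and comparing Taylor coefficients at $0$ gives $S_h(b) = q^{(h)}(0)/h! = p^{(k+h)}(0)/(h!\,p^{(k)}(0)) = \big((k+h)!/(k!\,h!)\big)S_{k+h}(a)/S_k(a) = {k+h\choose k}S_{k+h}(a)/S_k(a)$, as desired. (If any $S_{k+h}(a)$ vanishes that simply means the corresponding $b_j=0$, no issue.) Once this identity is in hand, plugging $b$ into Theorem~\ref{thm:boundUsingS1E2} with the variance bound $S_1^2(b)+|S_2(b)| \le 2C^2$ finishes the proof, and it also transparently reproves Fact~\ref{fact:2} in the limit $C\to 0$.
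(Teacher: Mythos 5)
Your proof is correct and follows essentially the same route as the paper: differentiate $p(\xi)=\prod_i(1+a_i\xi)$ a total of $k$ times, use preservation of real-rootedness under differentiation together with $p^{(k)}(0)=k!\,S_k(a)\neq 0$ to extract a real vector $b$ with $S_h(b)=\binom{k+h}{k}S_{k+h}(a)/S_k(a)$, and then apply Theorem~\ref{thm:boundUsingS1E2}. The $\sqrt{2}$ slack you flag (coming from $S_1^2(b)+|S_2(b)|\le 2C^2$) is also silently absorbed in the paper's own write-up, so your remark is apt rather than a defect.
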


Theorem~\ref{thm:GenBound} is proved by reduction to Theorem~\ref{thm:boundUsingS1E2}.
The proof of Theorem~\ref{thm:boundUsingS1E2} is analytic and uses the
method of Lagrange multipliers, and is different from that of
\cite{GMRTV} which relied on the Newton-Girrard identities.  The
argument is quite general, and similar bounds may be obtained for
functions that are recursively defined. 
\XX{The proof can be found in Section~\ref{sec:inequalities}.}

Stronger bounds are known when the inputs are nonnegative.
When $a_i \geq 0$ for all $i \in [n]$, the classical Maclaurin inequalities \cite{CS} imply that
$S_k(a) \leq (e/k)^k(S_1(a))^k$.
In contrast, when we do not assume non-negativity, one cannot hope for
such bounds to hold under the assumption that $|S_1(a)|$ or any single
$|S_k(a)|$ is small (cf.\ the alternating signs example above).

\eat{it is carried by applying
Theorem~\ref{thm:boundUsingS1E2}
to the $k$'th derivative of the polynomial defined in Equation \eqref{thm:boundUsingS1E2}.
The proof can be found at the end of Section~\ref{sec:inequalities}.}

\subsection{Expectations of products under limited independence} 

\eat{Pseudorandomness studies the possibility making efficient computation
deterministic with only small losses in efficiency.}

We return to the question alluded to earlier about how much independence is
required for the approximate product rule of
expectation.  This question arises in the context of min-wise hashing
\cite{Indyk}, PRGs for combinatorial rectangles
\cite{EGLNV, GMRTV}, read-once DNFs \cite{GMRTV} and more.

%

One could derive bounds of similar shape to ours using the work of \cite{GMRTV},
but with much stronger assumptions on the variables. 
More precisely, one would require  $\E[X_i^{2k}] \leq
(2k)^{2k}\sigma_i^{2k}$ for all $i \in [n]$, and get an error
bound of roughly $k^{O(k)}(\sum_i \sigma_i^2)^{\Omega(k)}$. These stronger assumptions limit
the settings where their bound can be applied (biased variables
typically do not have good moment bounds), and ensuring these conditions hold
led to tedious case analysis  in analyzing their PRG construction.

We briefly outline our approach.
We start from the results of \cite{EGLNV,Indyk} who give an error bound of $\exp(-k)$.
To prove this, they consider random
variables $Y_i = 1 -X_i$, so that
\begin{align}
\label{eq:incl-excl} 
\prod_{i=1}^nX_i = \prod_{i=1}^n(1- Y_i)  = \sum_{j=0}^n(-1)^jS_j(Y_1,\ldots,Y_n). 
\end{align}
By inclusion-exclusion/Bonferroni inequalities, the series on the
right gives alternating upper and lower bounds, and the error incurred by
truncating to $k$ terms is bounded by $S_k(Y)$. So we can bound the expected error by $\E[S_k(Y)]$ for which  $k$-wise
independence suffices.

Our approach replaces inclusion-exclusion by a Taylor-series style expansion about the mean, as in \cite{GMRTV}. Let us assume $\mu_i \neq 0$ and let
$X_i = \mu_i(1 + Z_i)$. Thus,
\begin{align}
\label{eq:taylor}
\prod_{i=1}^nX_i = \prod_{i=1}^n\mu_i(1+ Z_i) = \prod_{i=1}^n\mu_i\left( \sum_{j=0}^nS_j(Z)\right).
\end{align}
If this series were alternating, then we would
only need to bound $\E[|S_k(Z)|]$, which is easy. However, this need
not be true since $Z$ may have negative entries (even if we start with
$X_i$s all positive). So, to
argue that the first $k$ terms are a good approximation, we need to
bound the tail $\sum_{\ell \geq k}S_\ell(Z)$. At first, this seems
problematic, since this involves high degree polynomials, and it seems
hard to get their expectations right assuming just $k$-wise
independence\footnote{We formally show this in Section \ref{sec:lower}.}.
Even though we cannot bound $\E[S_\ell(Z)]$ under
$k$-wise independence once $\ell \gg k$, we use our new
inequalities for symmetric polynomials to get strong tail bounds on
them. This lets us show that truncating Equation~\eqref{eq:taylor} after $k$ terms gives error roughly $O(\sigma^{ck})$, and thus $k =
O(\log(1/\delta)/\log(1/\sigma)$ suffices for error $\delta$.
We next describe these tail bounds in detail.

We assume the following setup: $Z =(Z_1,\ldots,Z_n)$ is a vector of real valued random variables
where $Z_i$ has  mean $0$ and variance $\sigma_i^2$, and $\sigma^2
= \sum_{i} \sigma_i^2 < 1$.  Let $\cal{U}$ denote the distribution where the coordinates of $Z$ are independent.
\eat{
It is easy to show (see Lemma \ref{lem:exp-kwise}) that \[\E_{X \in \cal{U}}[|S_\ell(X)|] \leq \frac{\sigma^\ell}{\sqrt{\ell!}}.\]
In particular, if $\sigma^2 <1$ then $\E[|S_\ell|]$
decays exponentially with $\ell$.  For $t >0$ and $t\sigma
\leq 1/2$, we may also conclude }
One can show that $\E_{Z \in \U}[|S_\ell(Z)|] \leq
\sigma^\ell/\sqrt{\ell!}$ and hence by Markov's inequality  (see Corollary \ref{cor:k-wise}) when $t >1$ and $t\sigma \leq 1/2$, 
\begin{align}
\label{eq:tail}
\Pr_{Z \in \U}\left[\sum_{\ell = k}^n |S_\ell(Z)| \geq 2(t\sigma)^k\right]
\leq 2t^{-2k}.
\end{align}

Although $k$-wise independence does not suffice to bound
$\E[S_\ell(Z)]$ for $\ell \gg k$,  we use Theorem \ref{thm:GenBound} to
show that a similar tail bound  holds under limited independence.

\begin{theorem}
\label{thm:main}
Let $\cal{D}$ denote a distribution over $Z = (Z_1,\ldots,Z_n)$
as above where the $Z_i$s are $(2k+2)$-wise independent.
For $t >0$ and\footnote{A weaker but more technical assumption
on $t,\sigma,k$ suffices, see Equation \eqref{eqn:Endsum}.} $16 e t\sigma \leq 1$,
\begin{align}
\label{eq:kwise-tail2}
\Pr_{X \in \cal{D}} \left[\sum_{\ell=k}^n |S_\ell(Z)| \geq 2(6 e t\sigma)^k \right] 
\leq 2t^{-2k} .
\end{align}
\end{theorem}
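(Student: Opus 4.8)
The plan is to split the sum at level $k+1$,
\[
\sum_{\ell=k}^n |S_\ell(Z)| \;=\; \sum_{\ell=k}^{k+1}|S_\ell(Z)| \;+\; \sum_{\ell=k+2}^{n}|S_\ell(Z)|,
\]
bound the first (low‑order) piece by a direct second‑moment computation, and bound the second (high‑order) piece \emph{deterministically} in terms of $S_{k-1}(Z),S_k(Z),S_{k+1}(Z)$ via Theorem~\ref{thm:GenBound}. The only probabilistic input is that $(2k+2)$‑wise independence makes the low‑degree moments agree with the fully independent case: for every $\ell\le k+1$ the polynomial $S_\ell(Z)^2$ has degree $2\ell\le 2k+2$, so
\[
\E_{\D}\big[S_\ell(Z)^2\big] \;=\; \E_{\U}\big[S_\ell(Z)^2\big] \;=\; S_\ell(\sigma_1^2,\dots,\sigma_n^2) \;\le\; \sigma^{2\ell}/\ell!\,,
\]
the middle equality because any two distinct $\ell$‑sets leave some $Z_i$ to an odd power, and the last inequality because $S_\ell$ of nonnegative arguments is at most $\tfrac1{\ell!}(\sum_i\sigma_i^2)^\ell$; similarly $\E_\D[S_\ell(Z)]=0$ for $\ell\le 2k+2$.

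For the low‑order piece, Cauchy--Schwarz gives $\big(\sum_{\ell=k}^{k+1}|S_\ell(Z)|\big)^2\le 2S_k(Z)^2+2S_{k+1}(Z)^2$, whose expectation is at most $4\sigma^{2k}/k!$, so by Markov
\[
\Pr_\D\Big[\textstyle\sum_{\ell=k}^{k+1}|S_\ell(Z)|\ge (6e t\sigma)^k\Big]\;\le\;\frac{4\sigma^{2k}/k!}{(6et\sigma)^{2k}}\;\le\; t^{-2k}.
\]
For the high‑order piece I would work on the event $G$ that $|S_{k-1}(Z)|,|S_k(Z)|,|S_{k+1}(Z)|$ are each within a bounded factor of $t^{O(1)}$ times their standard deviations; Chebyshev (or Markov applied to $\sum_{\ell=k-1}^{k+1}S_\ell(Z)^2$) gives $\Pr_\D[\bar G]\le t^{-2k}$, and a union bound with the display above then yields the theorem with $2t^{-2k}$ on the right, \emph{provided} one can show that on $G$ one has $\sum_{\ell\ge k+2}|S_\ell(Z)|\le (6et\sigma)^k$.

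To get that deterministic bound I would invoke Theorem~\ref{thm:GenBound} with its index ``$k$'' taken to be $k-1$: when $S_{k-1}(z)\ne0$, setting $C$ to the minimal admissible value $\max\!\big(k\,|S_k(z)/S_{k-1}(z)|,\ (\binom{k+1}{2}|S_{k+1}(z)/S_{k-1}(z)|)^{1/2}\big)$ gives, for all $h\ge1$, $\big|\binom{k-1+h}{k-1}S_{k-1+h}(z)/S_{k-1}(z)\big|\le(6eC/\sqrt h)^h$, hence
\[
\sum_{\ell\ge k+2}|S_\ell(z)| \;=\;\sum_{h\ge3}|S_{k-1+h}(z)|\;\le\;|S_{k-1}(z)|\sum_{h\ge3}\binom{k-1+h}{k-1}^{-1}\!\Big(\tfrac{6eC}{\sqrt h}\Big)^{\!h}.
\]
The binomial coefficients in the denominators together with the $h^{-h/2}$ decay make this series geometric with a tiny ratio once $C$ is of its typical size on $G$ and $t\sigma$ is small enough; isolating exactly when the series converges is what lies behind the clean hypothesis $16et\sigma\le1$ (the ``$\eqref{eqn:Endsum}$'' condition), and under it the bound comes out as $(6et\sigma)^k$ — the extra factor $(6e)^k$ over the fully‑independent threshold $(t\sigma)^k$ of Equation~\eqref{eq:tail} absorbing the slack.

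The step I expect to be the real obstacle is controlling the \emph{denominator} $S_{k-1}(z)$. Theorem~\ref{thm:GenBound} needs the ratios $S_k(z)/S_{k-1}(z)$ and $S_{k+1}(z)/S_{k-1}(z)$ bounded, yet nothing so far prevents $S_{k-1}(z)$ from being anomalously small — or zero — relative to $S_k(z)$, and there is no anti‑concentration for $S_{k-1}(Z)$ in general; indeed the purely deterministic statement ``$S_{k-1},S_k,S_{k+1}$ small $\Rightarrow$ tail small'' is false (e.g.\ for real‑rooted polynomials whose coefficients near level $k$ change sign). I would handle this by a short case analysis that re‑anchors downward: let $j\le k-1$ be the largest index with $S_j(z)\ne0$; if $j\le k-3$ then $S_{j+1}(z)=S_{j+2}(z)=0$ and Fact~\ref{fact:2} forces the whole tail to vanish, so only $j\in\{k-1,k-2\}$ remain, and in the residual case $S_{k-1}(z)=0$ the auxiliary vector that Theorem~\ref{thm:GenBound} produces (anchored now at $k-2$) has $S_1(\cdot)=0$, so only $|S_2(\cdot)|=\binom{k}{2}|S_k(z)/S_{k-2}(z)|$ has to be controlled — and here a Newton‑type inequality $S_{k-1}(z)^2\ge c\,S_{k-2}(z)S_k(z)$ constrains how magnitudes near level $k$ can interact. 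Making this robust, and folding it into the thresholds defining $G$ so that the atypical realizations carry total probability $O(t^{-2k})$, is where the technical work concentrates.
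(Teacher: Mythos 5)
Your overall architecture matches the paper's: control $\E[S_k^2],\E[S_{k+1}^2]$ using only $(2k+2)$-wise independence, apply Markov to get a good event of probability $1-2t^{-2k}$, and then bound the remaining tail $\sum_{\ell\ge k+2}|S_\ell(z)|$ \emph{deterministically} via Theorem~\ref{thm:GenBound}. But the step you flag as ``the real obstacle'' --- finding a legitimate anchor for Theorem~\ref{thm:GenBound} --- is exactly the crux, and your proposed resolution does not close it. The case analysis on the largest $j\le k-1$ with $S_j(z)\ne 0$ only separates ``zero'' from ``nonzero,'' whereas the actual difficulty is that $S_{k-1}(z)$ (or $S_{k-2}(z)$) can be nonzero but arbitrarily small compared to $S_k(z)$; then $C$ is huge and the conclusion of Theorem~\ref{thm:GenBound} is vacuous. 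The Newton-type inequality $S_{k-1}^2\ge c\,S_{k-2}S_k$ only yields sign/magnitude information in one direction and still cannot manufacture a lower bound on the anchor; as you note, there is no anti-concentration for $S_{k-1}(Z)$ under limited independence, so no event $G$ of the form you describe can supply one.

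The paper's proof (Lemma~\ref{lem:main}) resolves this with a purely combinatorial re-anchoring that you are missing. Mark index $j$ \emph{high} if $|S_j(x)|\ge (t\sigma)^j/\sqrt{j!}$ and \emph{low} if $|S_j(x)|\le (t\sigma)^j/\sqrt{j!}$. On the good event, $k$ and $k+1$ are low, while $0$ is always high since $S_0=1$. Taking $k_0$ to be the largest high index in $\{0,\dots,k-1\}$ gives a triple $k_0,k_0+1,k_0+2$ with the first high and the next two low. The point is that ``high'' is a \emph{quantitative} lower bound $|S_{k_0}(x)|\ge (t\sigma)^{k_0}/\sqrt{k_0!}>0$ on the anchor, obtained for free from the benchmark sequence rather than from any probabilistic anti-concentration; combined with the low bounds at $k_0+1,k_0+2$ it yields ratio bounds with $C=\gamma\sqrt{k_0+1}$ where $\gamma\le t\sigma$, and Theorem~\ref{thm:GenBound} anchored at $k_0$ (not at $k-1$) then controls every $|S_\ell(x)|$ for $\ell\ge k$, giving $|S_\ell(x)|\le(6et\sigma)^\ell(k/\ell)^{\ell/2}$ and hence the claimed sum bound. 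Without this device (or an equivalent one) your argument does not go through; with it, the rest of your outline is essentially the paper's proof.
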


Typically proofs of tail bounds under limited independence proceed by bounding the expectation of some
suitable low-degree polynomial. The proof of Theorem \ref{thm:main}
does not follow this route. In Section \ref{sec:lower}, we give an example of $Z_i$s and a $(2k +2)$-wise independent
distribution on where $\E[|S_\ell(Z)|]$ for $\ell \in \{2k +3,\ldots, n- 2k
-3\}$ is much larger than under the uniform distribution. The same
example also shows that our tail bounds are close to tight.

\eat{Lemma~\ref{lem:main} below shows that for $t > 0$ and for $\ell
\geq k$, except with $\cal{D}$-probability $2t^{-2k}$ 
\[ |S_\ell(X)|  \leq (6 e t\sigma)^{\ell}  
\left( \frac{k}{\ell} \right)^{\ell/2}. \]}

\eat{In place of Theorem \ref{thm:boundUsingS1E2}, one could plug in the bound given by Equation
\eqref{eqn:GMR} \XX{to} get a somewhat weaker version of Theorem \ref{thm:GenBound}.  
However, it seems that the resulting bound will not be
strong enough to prove Theorem \ref{thm:main}, and the asymptotic
improvement given by Theorem \ref{thm:boundUsingS1E2} is crucial.}

\subsection{Applications to pseudorandom generators and hash functions}

A hash function is a map $h:[n] \rgta [m]$.
Let $\uh$ denote the family of all hash functions $h:[n] \rgta [m]$.
Let $\hh \subseteq \uh$ be a family of hash functions.   
For $S \subseteq [n]$, let $\min h(S) = \min_{x \in S} h(x)$. 
The notion of min-wise independent hashing was introduced by Broder et
al.~\cite{BCFM} motivated by similarity estimation, and independently
by Mulmuley~\cite{Mulmuley} motivated by computational geometry.
The following generalization was introduced by Broder et al.~\cite{BCM}:

\begin{definition}
We say that $\hh: [n] \rgta [m]$ is
approximately $\ell$-minima-wise independent with error $\eps$ if for
every $S \subseteq [n]$ and for every sequence $T = (t_1,\ldots,t_\ell)$ of $\ell$ distinct
elements of $S$,
\[ \left|\Pr_{h \in \hh}[h(t_1) < \cdots < h(t_\ell) < \min
  h(S\setminus T)] -  \Pr_{h \in \uh}[h(t_1) < \cdots < h(t_\ell) < \min
  h(S\setminus T)] \right| \leq \eps. \]
\end{definition}
 
Combinatorial rectangles are a well-studied class of tests in
pseudorandomness
\cite{EGLNV,LinialLSZ97,SaksSZZ99,ArmoniSWZ96,Lu02,GMRTV}. In addition
to being a natural class of statistical tests, constructing generators
for them with optimal seeds (up to constant factors) will improve on
Nisan's generator for logspace \cite{ArmoniSWZ96}, a long-standing
open problem in derandomization.

\begin{definition}
A combinatorial rectangle is a function $f:[m]^n \rgta \zo$ which is
specified by $n$ co-ordinate functions $f_i: [m] \rgta \zo$ as 
$f(x_1,\ldots,x_n) = \prod_{i \in m}f_i(x_i)$.
A map $\gcr:\zo^r \rgta [m]^n$ is a $\prg$ for combinatorial
rectangles with error $\eps$ if for every combinatorial rectangle $f:[m]^n \rgta \zo$, 
\[ \left|\E_{x \in \zo^r}[f(\gcr(x))] -\E_{x \in [m]^n}[f(x)]\right| \leq \eps. \]
\end{definition}

A generator $\gcr:\zo^r \rgta [m]^n$ can naturally be thought of as
a collection of $2^r$ hash functions, one for each seed. 
For $y \in \zo^r$, let $\gcr(y) = (x_1,\ldots,x_n)$. The corresponding hash
function is given by $g_y(i) = x_i$. The corresponding hash functions
have the property that the probability that they fool all test
functions given by combinatorial rectangles. Saks et al.~\cite{SaksSZZ99}
showed that this suffices for $\ell$-minima-wise independence.
They state their result for $\ell =1$, but their proof  extends to all
$\ell$ (see appendix \ref{app:proof}). Constructions of PRGs for
rectangles and min-wise hash functions that 
achieve seed-length $O(\log(mn)\log(1/\eps))$ were given by
\cite{EGLNV} and \cite{Indyk} respectively using limited independence.
The first construction $\gmr$ to achieve seed-length $\tilde{O}(\log(mn/\eps))$ was given recently
by \cite{GMRTV}. We use our results to give an anlysis of their
generator which we believe is simpler and more intutitive, which also
improves the seed-length, to (nearly) match the lower bound from
\cite{LinialLSZ97}.

We take the view of $\gmr$ as a collection of hash functions
$g:[n] \to [m]$,  based on iterative applications of an {\em
  alphabet squaring} step. We describe the generator formally  in
Section \ref{sec:gmr}. We start by observing that fooling rectangles is
easy when $m$ is small; $O(\log(1/\delta))$-wise independnce
suffices, and this requires $O(\log(1/\delta)\log(m))
=O(\log(1/\delta))$ random bits for $m =O(1)$. 

The key insight in \cite{GMRTV} is that
gradually increasing the alphabet is also easy (in that it requires only
logarithmic randomness). Assume that we have a hash function $g_0:[n] \to [m]$ and from it, we define
$g_1:[n] \to [m^2]$. To do this, we pick a function $g_1':[n]
\times [m] \to [m^2]$ and set $g_1(i) = g_1'(i,g_0(i))$. The key
observation is that it suffices to pick $g_1'$ using
only $O(\log(1/\delta)/\log(m))$-wise independence (rather than the
$O(\log(1/\delta))$-wise independence needed for one shot).

To see why this is so, fix subsets $S_i \subset [m^2]$ for each
co-ordinate and pretend that $g_0$ is truly random. One can show that
$\Pr_{g_0}[g_1(i) \in S_i]$ is a random variable over the
choice of $g_1'$ with variance $1/\poly(m)$. Since we are
interested in $\prod_{i}\Pr_{g_0}[g_1(i) \in S_i]$, which is the
product of $n$  small variance random variables, Theorem
\ref{thm:intro-product} says it suffices to
use limited independence\footnote{To optimize the seed-length, we actually use
almost $k$-wise independence rather than exact $k$-wise
independence. So the analysis does not use Theorem \ref{thm:intro-product} as
a black-box, but rather it directly uses Theorem \ref{thm:main}.}.

\begin{theorem}
\label{thm:gmr}
Let $\gmr$ be the family of hash functions from $[n]$ to $[m]$
defined in Section \ref{sec:gmr-def}  with error parameter $\delta > 0$. 
The seed length is at most $O((\log\log(n) + \log(m/\delta)) \log\log
(m/\delta))$.  Then, for every $S_1, \ldots,S_n \subseteq [m]$,
\[ \left|\Pr_{g \in \gmr}[\intr{i \in [n]}g(i) \in S_i] - \Pr_{h \in
  \uh}[\intr{i \in [n]}h(i) \in S_i]\right| \leq \delta. \]
\end{theorem}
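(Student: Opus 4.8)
The plan is a hybrid argument along the alphabet‑squaring levels of the construction, analyzing each level by reducing it to a product of bounded, small‑variance random variables and invoking Theorems~\ref{thm:intro-product} and~\ref{thm:main}. Set up: write the construction as producing $g_0:[n]\to[m_0]$ with $m_0=O(1)$ from an almost $k_0$‑wise independent source, and then $g_j(i)=g'_j(i,g_{j-1}(i))$ for $j=1,\dots,T$, where $m_j=m_{j-1}^2$, $m_T=m$, $T=O(\log\log m)$, and each $g'_j:[n]\times[m_{j-1}]\to[m_j]$ is an almost $k_j$‑wise independent function drawn with fresh seed bits. Since the generator does not see the $S_i$, I may discard ``trivial'' coordinates ($S_i=[m]$) and, using that both sides are at most $\delta$ once more than $\sim m\log(1/\delta)$ coordinates are nontrivial (directly for $\prod_i|S_i|/m$, and for $\gmr$ by restricting to a sub‑instance on that many coordinates and inducting on $n$), assume the number of nontrivial coordinates is $O(m\log(1/\delta))$. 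Writing $\mu_i=|S_i|/m$ and letting $P_j$ be the worst case over $S_i\subseteq[m_j]$ of $\big|\Pr[\forall i:\,g_j(i)\in S_i]-\prod_i|S_i|/m_j\big|$, the theorem is the statement $P_T\le\delta$, and I would deduce it from $P_j\le P_{j-1}+\eps_j$ together with a base‑case bound on $P_0$, choosing parameters so that $P_0+\sum_j\eps_j\le\delta$.

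The base case $P_0$ is over a constant‑size alphabet, where I would use the inclusion–exclusion argument of \cite{EGLNV,Indyk}: expanding $\prod_i\ind[g_0(i)\in S_i]=\prod_i(1-Y_i)$ with $Y_i=1-\ind[g_0(i)\in S_i]$ and invoking the Bonferroni inequalities, the degree‑$k_0$ truncation has error at most $\E[S_{k_0}(Y)]$, which equals $S_{k_0}(1-\mu)$ up to the almost‑independence slack; this is $2^{-\Omega(k_0)}$ by Maclaurin when $\sum_i(1-\mu_i)$ is small, and otherwise $\prod_i\mu_i$ and $\E[\prod_i\ind[g_0(i)\in S_i]]$ are themselves below $\delta_0$ (the latter by restricting to $O(\log(1/\delta_0))$ coordinates and using $k_0$‑wise independence), giving $P_0\le\delta_0$ for $k_0=O(\log(1/\delta_0))$.

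The heart is the inductive step. Fix $S_i\subseteq[m_j]$, condition on $g'_j$, and set $T_i=T_i(g'_j)=\{v\in[m_{j-1}]:g'_j(i,v)\in S_i\}$; then $\Pr[\forall i:\,g_j(i)\in S_i]=\E_{g'_j}\!\big[\Pr_{g_{j-1}}[\forall i:\,g_{j-1}(i)\in T_i]\big]=\E_{g'_j}\!\big[\prod_iX_i\big]\pm P_{j-1}$ by the definition of $P_{j-1}$ applied to the sets $T_i$, where $X_i:=|T_i|/m_{j-1}\in[0,1]$. Under a truly random $g'_j$ the $X_i$ are independent with $\E X_i=\mu_i:=|S_i|/m_j$ and $\Var(X_i)=\mu_i(1-\mu_i)/m_{j-1}$, so the total variance is at most $\tfrac1{m_{j-1}}\sum_i(1-\mu_i)\le\log(1/\delta)/m_{j-1}$ on the relevant event $\{\prod_i\mu_i\ge\delta\}$, which drops below $(2c_1)^{-2}$ once $m_{j-1}$ exceeds a constant multiple of $\log(1/\delta)$; moreover $X_i$ depends on $g'_j$ only through the disjoint block $\{g'_j(i,v)\}_v$, so an almost $k_j$‑wise independent $g'_j$ makes the $X_i$ behave as almost $\lfloor k_j/m_{j-1}\rfloor$‑wise independent bounded variables with essentially this variance. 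Theorem~\ref{thm:intro-product} then bounds $\big|\E_{g'_j}[\prod_iX_i]-\prod_i\mu_i\big|$ by $(c_1\sigma)^{\Omega(k_j/m_{j-1})}$; to handle almost‑ rather than exact independence while keeping the seed optimal I would instead split off the coordinates with tiny $\mu_i$ and feed the shifted variables $Z_i=X_i/\mu_i-1$ into Theorem~\ref{thm:main}, whose sharp constants (inherited from Theorem~\ref{thm:boundUsingS1E2}) are what allow $\eps_j\le\delta/2T$ with only $k_j\log m_j=O(\log(m/\delta))$.

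Finally, the seed‑length accounting: $g_0$ and each $g'_j$ (an almost $k_j$‑wise independent function on $[n]\times[m_{j-1}]$ with $k_j\log m_j=O(\log(m/\delta))$ and error parameter $\gamma=\poly(\delta/m)$, which the reduction above makes affordable) cost $O\big(k_j\log m_j+\log(1/\gamma)+\log\log(nm_{j-1})\big)=O(\log(m/\delta)+\log\log n)$ bits, and summing over $T=O(\log\log(m/\delta))$ levels gives the claimed bound. I expect the main obstacle to be exactly the tension in the inductive step between the product being governed by a total variance $\approx\log(1/\delta)/m_{j-1}$ and the alphabet $m_{j-1}\le m$ possibly being much smaller: the first $O(\log\log\log(1/\delta))$ squaring steps, where $m_{j-1}$ is still small, fall outside the regime where Theorem~\ref{thm:intro-product} is useful and must be argued separately (using that after the reduction their effective number of coordinates is only polylogarithmic in $1/\delta$), and making this dovetail with the sharp tail bound of Theorem~\ref{thm:main} so that per‑level error and per‑level seed both come out right is the delicate point. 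By contrast, the ``pretend $g_{j-1}$ is truly random'' step is painless -- it is exactly the $P_{j-1}$ slack -- and the passage from exact to almost $k$‑wise independence is absorbed by routing through Theorem~\ref{thm:main} rather than using Theorem~\ref{thm:intro-product} as a black box.
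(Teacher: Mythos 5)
Your architecture matches the paper's: a hybrid argument over the squaring levels, a base case via the Bonferroni/inclusion--exclusion argument of \cite{EGLNV} adapted to $(k,\eps)$-wise independence, and an inductive step that writes the acceptance probability as $\prod_i X_i$ with $X_i=|T_i|/m_{j-1}$, splits off coordinates with small $\mu_i$, and controls the tail of the symmetric-polynomial expansion. But there is a genuine gap in the independence bookkeeping of the inductive step, and it sits exactly where the seed length is won or lost. You correctly observe that each $X_i$ depends on its own disjoint block of $m_{j-1}$ entries of $g'_j$, and conclude that $k_j$-wise independence of the entries makes the $X_i$ behave as $\lfloor k_j/m_{j-1}\rfloor$-wise independent random variables. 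With $k_j=O(\log(1/\delta')/\log m_j)$ and $m_{j-1}=\sqrt{m_j}$, that quantity is typically $0$, so neither Theorem~\ref{thm:intro-product} nor Theorem~\ref{thm:main} applies as stated (both require genuine $\Omega(k)$-wise independence of the $Z_i$'s), and the bound $(c_1\sigma)^{\Omega(k_j/m_{j-1})}$ you quote is vacuous. Treating the $X_i$ as black-box random variables forces $k_j=\Omega(k\, m_{j-1})$ entries of independence and hence seed $\Omega(m_{j-1}\log(1/\delta))$ per level, destroying the result. The missing idea is to exploit that each $Z_i$ is an \emph{affine} function of its disjoint block, so $S_\ell(Z_1,\ldots,Z_n)$ expands as a multilinear polynomial of degree $\ell$ in the entries of $g'_j$ (one entry per distinct coordinate $i$) with $\nm$-norm at most $(\sum_i\nm(Z_i))^\ell\le (m\ell^{O(1)})^\ell$. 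The only places the proof of Theorem~\ref{thm:main} uses the distribution are the two second-moment bounds $\E[S_k^2],\E[S_{k+1}^2]$ that define the conditioning event; these are degree-$O(k)$ polynomials in the entries, so $(O(k),\eps)$-wise independence of the \emph{entries} with $\eps\le(m\ell)^{-Ck}$ fools them, and the rest of the argument (Theorem~\ref{thm:GenBound} applied pointwise to the conditioned outcome) needs no independence at all. This is the content of the paper's Claim~\ref{clm:PnWUandD}; saying one will ``route through Theorem~\ref{thm:main}'' does not by itself supply it.

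A secondary issue: you take $m_0=O(1)$ and then correctly worry that the first squaring steps, where $m_{j-1}$ is small, fall outside the regime where the variance $\approx\log(1/\delta)/m_{j-1}$ is small, proposing to argue them separately. The paper sidesteps this entirely by choosing $m_0=\mathrm{polylog}(1/\delta)$ in the base case (which plain $O(\log(1/\delta))$-wise independence handles at cost $O(\log(1/\delta)\log\log(1/\delta))$ seed bits); then every squaring step has $\sigma^2\le m_{j-1}^{-\Omega(1)}$ and the per-level error $m_{j-1}^{-\Omega(k_j)}=\delta'^{\Omega(1)}$ comes out uniformly. Your ``delicate point'' is thus an artifact of the base-alphabet choice rather than something requiring a separate argument.
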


This improves the \cite{GMRTV} bound in the dependence on $n$
and $\delta$ (their bound was $O(\log(mn/\delta)\log\log(m) +
\log(1/\delta)\log\log(1/\delta)\log\log\log(1/\delta))$). In
particular, the dependence on $n$ reduces from $\log(n)$ to
$\log\log(n)$\footnote{The reason $\log\log(n)$ seedlength is possible is because every
  rectangle can be $\eps$-approximated by one that depends only on
  $O(m\log(1/\eps))$ co-ordinates. Hence the number of functions to
  fool grows polynomially in $n$, rather than exponentially.}. \cite{LinialLSZ97} showed a lower bound of
$\Omega(\log(m) + \log(1/\eps) + \log\log(n))$ even for hitting sets,
so our bound is tight upto the $\log\log(m/\delta)$ factor. While
\cite{LinialLSZ97} constructed hitting-set generators for rectangles
with near-optimal seedlength, we are unaware of previous constructions
of pseudorandom generators for rectangles where the dependence of the seedlength on
$n$ is $o(\log(n))$.

Combining this with Theorem \ref{thm:prg-to-hash}, we get the following corollary.

\begin{corollary}
\label{cor:mwh}
For every $\ell$, there is a family of 
approximately $\ell$-minima-wise independent hash functions with error $\eps$ and 
seed length at most $O((\log\log(n) + \log(m^\ell/\eps))(\log\log(m^\ell/\eps)))$.
\end{corollary}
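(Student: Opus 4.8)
The plan is to derive Corollary~\ref{cor:mwh} by plugging the generator $\gmr$ of Theorem~\ref{thm:gmr} into the reduction from pseudorandom generators for combinatorial rectangles to approximate minima-wise independence (Theorem~\ref{thm:prg-to-hash}, following Saks et al.~\cite{SaksSZZ99}), and then simplifying the seed length. First recall the shape of that reduction. Fix $S \subseteq [n]$, distinct $t_1,\dots,t_\ell \in S$, and write $T = (t_1,\dots,t_\ell)$. Summing over the (necessarily distinct and increasing) tuple of values $v_1 < \cdots < v_\ell$ taken by $h(t_1),\dots,h(t_\ell)$, the event $\{h(t_1)<\cdots<h(t_\ell)<\min h(S\setminus T)\}$ decomposes into a disjoint union of at most $\binom{m}{\ell}\le m^\ell$ combinatorial rectangles over $[m]^n$: the coordinate test at $t_j$ is $\ind[\,\cdot = v_j\,]$, the test at any coordinate of $S\setminus T$ is $\ind[\,\cdot > v_\ell\,]$, and the test at any other coordinate is the constant $1$. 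Hence, if a generator fools every combinatorial rectangle over $[m]^n$ with error $\eta$, then, read as a family of hash functions, it is approximately $\ell$-minima-wise independent with error at most $\binom{m}{\ell}\,\eta \le m^\ell\,\eta$.

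Next I would instantiate Theorem~\ref{thm:gmr} with the alphabet size $m$ and the error parameter
\[
\delta \;:=\; \eps / m^{\ell},
\]
noting that the displayed conclusion of Theorem~\ref{thm:gmr} is exactly the statement that $\gmr$ fools every combinatorial rectangle $f(x)=\prod_i \ind[x_i \in S_i]$ over $[m]^n$ with error $\delta$. By the previous paragraph, the induced hash family is then approximately $\ell$-minima-wise independent with error at most $m^\ell\delta = \eps$, as required. Theorem~\ref{thm:gmr} bounds the seed length of this $\gmr$ by
\[
O\big((\log\log n + \log(m/\delta))\,\log\log(m/\delta)\big) \;=\; O\big((\log\log n + \log(m^{\ell+1}/\eps))\,\log\log(m^{\ell+1}/\eps)\big).
\]

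It remains to simplify. For $\ell \ge 1$ and $\eps \le 1$ one has $m^{\ell+1}/\eps \le (m^{\ell}/\eps)^{2}$, so $\log(m^{\ell+1}/\eps) = O(\log(m^{\ell}/\eps))$ and likewise $\log\log(m^{\ell+1}/\eps) = O(\log\log(m^{\ell}/\eps))$; substituting gives the claimed bound $O\big((\log\log n + \log(m^{\ell}/\eps))\,\log\log(m^{\ell}/\eps)\big)$ (the cases $\ell = 0$ or $\eps \ge 1$ being trivial). I do not expect any genuinely hard step here: all of the real work lives in Theorem~\ref{thm:gmr} and in the by-now standard decomposition of the minima-wise event. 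The two points that need a little care are that the error blow-up in the reduction is multiplicative in $m^{\ell}$ and not in any polynomial of $n$ — which is precisely why the seed length keeps its $\log\log n$ dependence rather than acquiring a $\log n$ — and the degenerate case $T = S$, where $\min h(S\setminus T)$ is a minimum over the empty set and the decomposition simply omits the threshold coordinates.
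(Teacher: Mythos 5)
Your proposal is correct and matches the paper's own derivation: the paper obtains Corollary~\ref{cor:mwh} by exactly this combination of Theorem~\ref{thm:gmr} (instantiated with error $\delta = \eps/\binom{m}{\ell}$) and the Saks et al.\ reduction of Theorem~\ref{thm:prg-to-hash}, whose proof uses the same decomposition of the minima-wise event into at most $\binom{m}{\ell}$ disjoint combinatorial rectangles. The seed-length simplification via $\log(m^{\ell+1}/\eps) = O(\log(m^{\ell}/\eps))$ is also the intended (implicit) step.
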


\subsection{Subsequent work}

Very recently, Gopalan, Kane and Meka \cite{GKM15} constructed
the first pseudorandom generators with seed-length
$O((\log(n/\delta)\log\log(n/\delta)^2)$ for several classes of functions including
halfspaces, modular tests and combinatorial shapes. The key technical
ingredient of their work is a generalization of Theorem~\ref{thm:intro-product} to the setting where the $X_i$s are complex valued
random variables lying in the unit disc. Their proof however is very
different from ours, and in particular it does not imply 
the inequalities and tail bounds for symmetric polynomials that are proved
here. 

\eat{
\paragraph{Organization:} We present the proofs of our inequalities for symmetric polynomials in the main body, since they are the main technical contribution
of this work, and defer other proofs to the appendix. 
}

\paragraph{Organization:} We present the proofs of our inequalities
for symmetric polynomials in Section \ref{sec:inequalities} and tail bounds for
symmetric polynomials in Section \ref{sec:concentration}. We use these bounds to
prove Theorem \ref{thm:intro-product} on products of low-variance
variables in Section \ref{sec:products} and to analyze the \cite{GMRTV}
generator in Section \ref{sec:gmr}.

\eat{

To prove this, they consider the complementary random
variables $Y_i = 1 -X_i$, so that
\[ \E[\prod_{i=1}^nX_i] = \prod_{i=1}^n(1- \E[Y_i]) \leq e^{-\sum_{i=1}^n\E[Y_i]}.\]
If $\sum_{i=1}^n\E[Y_i] > C\log(1/\delta)$, then the expectation is small.
The interesting case is when $\sum_{i=1}^n\E[Y_i] \leq
C\log(1/\delta)$. In this case we write
\begin{align}
\label{eq:incl-excl} 
\E[\prod_{i=1}^nX_i] = \E[\prod_{i=1}^n(1- Y_i)]  =
\sum_{k=0}^n(-1)^i\E[S_k(Y_1,\ldots,Y_n)]. 
\end{align}
By viewing the $Y_i$s as indicators of independent events, we can
apply inclusion/exclusion, or the Bonferroni inequalities to conclude
that the error incurred by truncating the expansion after $k$ terms is bounded by
$\E[S_k(Y_1,\ldots,Y_n)]$. This in turn can be bounded under any
$k$-wise independent distribution for $k \geq
C'\log(1/\delta)$ using the Maclaurin identities as
\[\E[S_k(Y_1,\ldots,Y_n)] \leq \left(\frac{e}{k}\right)^k(\sum_{i=1}^n\E[Y_i])^k \leq \delta^{O(1)}.\]

The authors of \cite{GMRTV} used this idea to construct pseudorandom generators
for several families of tests including read-once DNF formulas and
combinatorial rectangles. A key part of their proof was to show that
the expected value of 
$\prod_{i \in [n]}(1+X_i)$
does not significantly change
between the case the inputs are independent and
the case the inputs are only $t$-wise independent,
under the assumption that $\E[X_i] =0$ for all $i \in [n]$ and $\sum_i\Var[X_i] \ll 1$. 

One approach to control the behaviour of 
$\prod_{i \in [n]}(1+X_i)$
is taking logarithms and using known concentration bounds
for sums of independent random variables. 
What happens, however, in settings where  
$X_i$ may take the value $-1$ or large positive values, 
when there is no good approximation to $\ln(1 + X_i)$? 
Such settings arise for example in the analysis
of the pseudorandom generators \cite{GMRTV}. 
Even assuming that $X_i$ is nicely bounded, and $\ln(1 + X_i)$ is well
approximated by the Taylor series, analyzing the error seems to require
higher moment bounds for the individual $X_i$s.
%
%

An alternate approach adopted in \cite{GMRTV} is to observe that
$$\prod_{i \in [n]}(1+X_i) = \sum_{\ell=0}^n S_\ell(X_1,\ldots,X_n),$$
and try to get a better control by understanding the behavior of the symmetric polynomials. 
A key ingredient of \cite{GMRTV} is indeed about controlling 
$$\sum_{\ell = k}^n |S_\ell(X_1,\ldots,X_n)|,$$
assuming the distribution is $O(k)$-wise independent. 
Potentially, this approach
does not require any boundedness assumptions,
and it could also work without higher moment estimates since all the polynomials involved are multilinear. 
\XX{
Nevertheless, the results of \cite{GMRTV} did require controlling higher moments of the $X_i$s as well. 
The reason is that they did not have an analogue of Theorem \ref{thm:GenBound}. So in order to use Equation \eqref{eqn:GMR}, they still
needed {\em strong} concentration for $S_1$ and $S_2$ which they obtained by bounding the higher moments. 
Our result removes the need for higher moment bounds:
we show that under $k$-wise independence, one can control the distribution of
$S_\ell(X_1,\ldots,X_n)$ even for $\ell > k$, given only first and second moment bounds on the individual $X_i$s.
}
}

\eat{
}

\section{Inequalities for symmetric polynomials}
\label{sec:inequalities}

\begin{proof}[Proof of Theorem~\ref{thm:boundUsingS1E2}]
It will be convenient to use
$$E_2(a) = \sum_{i \in [n]} a_i^2.$$
By Newton's identity, 
$E_2 = S_1^2 - 2S_2$ so for all $a \in \R^n$,
$$S^2_1(a) + E_2(a) \leq 2 (S^2_1(a) + |S_2(a)|).$$
It therefore suffices to prove that for all $a \in \R^n$ and $k \in [n]$,
\[ S^2_k(a) \leq \frac{(16 e^2 (S^2_1(a) + E_2(a)))^{k} }{k^{k}}.\]
We prove this by induction.
For $k \in \{1,2\}$, it indeed holds.
Let $k > 2$.
Our goal will be upper bounding the maximum of
the projectively defined\footnote{That is,
for every $a \neq 0$ in $\R^n$ and $c \neq 0$ in $\R$,
we have $\phi_k(ca) = \phi_k(a)$.} function
$$\phi_k(a) = 
\frac{S^2_k(a)}{(S^2_1(a) + E_2(a))^{k} }$$
under the constraint that $S_1(a)$ is fixed. 
Since $\phi_k$ is projectively defined, 
its supremum is attained in the (compact) unit sphere,
and is therefore a maximum.
Choose $a \neq 0$ to be a point that achieves the maximum
of $\phi_k$.
We assume, without loss of generality, that $S_1(a)$
is non-negative (if $S_1(a)<0$, consider $-a$ instead of $a$).
There are two cases to consider:

The first case is that for all $i \in [n]$, 
\begin{equation}
\label{eq:case1}
a_i \leq \frac{2k^{1/2} (S_1^2(a) + E_2(a))^{1/2}}{n} .
\end{equation}
In this case we do not need the induction hypothesis and can in fact replace each
$a_i$ by its absolute value. Let $P \subseteq [n]$ be the set of $i\in
[n]$ so that $a_i \geq 0$. Then by Equation \eqref{eq:case1}, 
\[ \sum_{i  \in P} |a_i|  \leq 2k^{1/2} (S_1^2(a) + E_2(a))^{1/2}.\]
Note that
\[ S_1(a) = \sum_{i \in P}|a_i| - \sum_{i \not\in P}|a_i| \geq 0.\]
Hence 
\[ \sum_{i \not\in P}|a_i| \leq \sum_{i \in P}|a_i| \leq 2k^{1/2} (S_1^2(a) + E_2(a))^{1/2}.\]
Overall we have
\[ \sum_{i  \in [n]} |a_i|  \leq 4k^{1/2} (S_1^2(a) + E_2(a))^{1/2}.\]
We then bound
\begin{align*}
|S_k(a_1,\ldots,a_n)| & \leq S_k(|a_1|,\ldots,|a_n|)\\
& \leq \left(\frac{e}{k}\right)^k\left(\sum_{i \in [n]}|a_i|\right)^k \ \ \text{By the Maclaurin identities}\\
& \leq \left(\frac{4e}{\sqrt{k}}\right)^k(S_1^2(a) + E_2(a))^{k/2}.
\end{align*}


The second case is that there exists $i_0 \in [n]$ so that
\begin{align}
\label{eqn:aiIsgood}
a_{i_0} > \frac{2k^{1/2} (S^2_1(a) + E_2(a))^{1/2} }{n} .
\end{align}
In this case we use induction and Lagrange multipliers.
For simplicity of notation, for a function $F$ on $\R^n$ denote 
$$F(-i) = F(a_1,a_2,\ldots,a_{i-1},a_{i+1},\ldots,a_n)$$
for $i \in [n]$.
So, for every $\delta \in \R^n$ so that $\sum_i \delta_i = 0$
we have $\phi_k(a+\delta) \leq \phi_k(a)$.
Hence\footnote{Here and below,
$O(\delta^2)$ means of absolute value at most 
$C \cdot \|\delta\|_\infty$
for $C = C(n,k) \geq 0$.}, for all $\delta$ so that $\sum_i \delta_i = 0$,
\begin{align*}
 \phi_k(a) 
& \geq \frac{S^2_k(a+\delta)}{(S^2_1(a+\delta) + E_2(a+\delta))^{k} } \\
& \geq \frac{(S_k(a) + \sum_i \delta_i S_{k-1}(-i) + O(\delta^2))^2}{
(S^2_1(a) + E_2(a) + 2 \sum_i a_i \delta_i + O(\delta^2))^{k} } \\ 
& \geq \frac{S^2_k(a) + 2 S_k(a) \sum_i \delta_i S_{k-1}(-i) + O(\delta^2)}{
(S^2_1(a) + E_2(a))^k + 2 k (S^2_1(a) + E_2(a))^{k-1} \sum_i a_i \delta_i + O(\delta^2)} .
\end{align*}
Hence, for all $\delta$ close enough to zero so that $\sum_i \delta_i = 0$,
\begin{align*} 
\frac{S^2_k(a)}{(S^2_1(a) + E_2(a))^{k} }
& \geq \frac{S^2_k(a) + 2 S_k(a) \sum_i \delta_i S_{k-1}(-i) + O(\delta^2)}{
(S^2_1(a) + E_2(a))^k + 2 k (S^2_1(a) + E_2(a))^{k-1} \sum_i a_i \delta_i + O(\delta^2)} ,
\end{align*}
or
\begin{align}
\label{eqn:resp0} 
  \sum_i \delta_i \left( a_i S_k(a)  k  -  
  (S^2_1(a) + E_2(a))  S_{k-1}(-i) \right)
\geq 0 .
\end{align}
For the above inequality to hold for all such $\delta$,
it must be that there is $\lambda$ so that for all $i \in [n]$,
\begin{align*}
a_i S_k(a)  k  -  
  (S^2_1(a) + E_2(a))  S_{k-1}(-i) = \lambda .
\end{align*}
To see why this is true, set $\lambda_i = a_i S_k(a)  k  -  
  (S^2_1(a) + E_2(a))  S_{k-1}(-i)$ .
We now have $\lambda_1,\ldots,\lambda_n$ so that  
\begin{equation} 
\label{eqn:resp1}
\sum_i \lambda_i\delta_i \geq 0
\end{equation}
for every $\delta_1,\ldots,\delta_n$ of sufficiently small norm where $\sum_i \delta_i =0$.
We claim that this implies that in fact $\lambda_i = \lambda$
for every $i$.
To see this, assume for contradiction that $\lambda_1 \neq \lambda_2$
and $|\lambda_1| > |\lambda_2|$. Set 
\[\delta_1 = -\mu\lambda_1, \ \delta_2 = \mu\lambda_1, \ \delta_3
= \delta_4 = \ldots = \delta_n =0\]
for $\mu > 0$ sufficiently small. It follows that
$\sum_i \delta_i =0$ and $\sum_i \lambda_i\delta_i 
= \mu ( \lambda_1 \lambda_2 - \lambda_1^2) < 0$
so Equation \eqref{eqn:resp1} is violated.

Sum over $i$ to get
\begin{align*}
\lambda n 
& = S_1(a) S_k(a)  k  - (S^2_1(a) + E_2(a)) (n-(k-1)) S_{k-1}(a) .
\end{align*}
Thus, for all $i \in [n]$,
\begin{align*}
a_i S_k(a)  k  & -  
  (S^2_1(a) + E_2(a))  S_{k-1}(-i) \\ 
&  = \frac{1}{n}\left(
S_1(a) S_k(a)  k  - (S^2_1(a) + E_2(a)) (n-(k-1)) S_{k-1}(a)\right) ,
\end{align*}
or
\begin{align*}
 S_k(a) &  k \left(a_i - \frac{S_1(a)}{n} \right)  \\ 
&  = 
  (S^2_1(a) + E_2(a))  (S_{k-1}(-i) - S_{k-1}(a)) +  \frac{ (k-1)}{n}  (S^2_1(a) + E_2(a)) S_{k-1}(a)) .
\end{align*}
This specifically holds for $i_0$, so using \eqref{eqn:aiIsgood}
we have
\begin{align*}
& \left| S_k(a)  k \frac{a_{i_0}}{2} \right| \\ 
& < \left| S_k(a)  k \left(a_{i_0} - \frac{S_1(a)}{n} \right) \right| \\ 
  &  \leq 
\left|  (S^2_1(a) + E_2(a))  a_{i_0}  S_{k-2}(-i_0) \right|
 + \left|  \frac{ (k-1)  (S^2_1(a) + E_2(a)) S_{k-1}(a)}{n} \right| ,
\end{align*}
or
\begin{align}
\label{eqn:Ski0} & \left| S_k(a)  \right| \\ 
\notag  &  \leq 
\left| \frac{ 2 (S^2_1(a) + E_2(a))   S_{k-2}(-i_0)}{k} \right|
 + \left| \frac{ 2 (k-1)  (S^2_1(a) + E_2(a)) S_{k-1}(a)}{n k a_{i_0} } \right| \\
  \notag &  <
\left| \frac{ 2 (S^2_1(a) + E_2(a))   S_{k-2}(-i_0)}{k} \right|
 + \left| \frac{   (S^2_1(a) + E_2(a))^{1/2} S_{k-1}(a)}{k^{1/2}  } \right| .
\end{align}
To apply induction we need to bound
$S_1^2(-i_0)+E_2(-i_0)$ from above. Since
\begin{align*}
S_1^2(a) + E_2(a)
- S_1^2(-i_0)-E_2(-i_0) 
& = a_{i_0}^2 + 2 a_{i_0} S_1(-i_0)
+ a_{i_0}^2 \\
& = 2 a_{i_0} S_1(a) \geq 0 .
\end{align*}
we have the bound
\begin{align*}
S_1^2(-i_0)+E_2(-i_0) \leq S_1^2(a) + E_2(a).
\end{align*}
Finally, by induction and \eqref{eqn:Ski0},
\begin{align*}
 \left| S_k(a)  \right| 
 &  \leq
 \frac{ 2 (S^2_1(a) + E_2(a))}{k}  \frac{(16 e^2 (S^2_1(-i_0) + E_2(-i_0)))^{(k-2)/2} }{(k-2)^{(k-2)/2}}
 \\ & + \frac{   (S^2_1(a) + E_2(a))^{1/2} }{k^{1/2}  } 
 \frac{(16 e^2 (S^2_1(a) + E_2(a)))^{(k-1)/2} }{(k-1)^{(k-1)/2}} \\
 &  \leq \frac{ (16e^2 (S^2_1(a) + E_2(a)))^{k/2} }{k^{k/2}} \left(
 \frac{ 2 }{16 e^2 \left( 1- \frac{2}{k} \right)^{(k-2)/2}} + 
 \frac{1 }{4e \left( 1- \frac{1}{k} \right)^{(k-1)/2}} \right) \\
 &  < \frac{ (16e^2 (S^2_1(a) + E_2(a)))^{k/2} }{k^{k/2}} .
\end{align*}

\end{proof}

\begin{proof}[Proof of Theorem~\ref{thm:GenBound}]
The proof is by reduction to Theorem~\ref{thm:boundUsingS1E2}.
Assume $a_1,\ldots,a_m$ are nonzero and $a_{m+1},\ldots,a_n$ are zero.
Denote $a' = (a_1,\ldots,a_m)$ and notice
that for all\footnote{For $k > m$ we have $S_k(a) =0$ so there is nothing to prove.}
$k \in [n]$,
$$S_k(a) = S_k(a').$$
Write
$$p(\xi) = \prod_{i \in [m]} (\xi a_i+ 1) =
\sum_{k = 0}^m \xi^k S_k(a) .$$
Derive $k$ times to get
\begin{align*}
p^{(k)}(\xi)   =
S_k(a) k! \left( {m \choose k} \frac{S_m(a)}{S_k(a)} \xi^{m-k} \right.
 + {m-1 \choose k} & \left. \frac{S_{m-1}(a)}{S_k(a)} \xi^{m-k-1} + \ldots \right. \\
& \left.  \ldots + 
 {k+1 \choose k} \frac{S_{k+1}(a)}{S_{k}(a)} \xi+ 1\right).
\end{align*}
Since $p$ has $m$ real roots,  
$p^{(k)}$ has $m-k$ real roots.
Since $p^{(k)}(0) \neq 0$,
there is $b \in \R^{m-k}$ so that
$$p^{(k)}(\xi) = S_k(a) k! \prod_{i \in [m-k]} (\xi b_i + 1).$$
For all $h \in [m-k]$,
$$S_h(b) = {k+h \choose k} \frac{S_{k+h}(a)}{S_{k}(a)}.$$
By assumption,
$$|S_1(b)| \leq C 
\ \ \text{and} \ \
|S_2(b)| \leq C^2.$$
Theorem~\ref{thm:boundUsingS1E2} implies
$$|S_h(b)| = \left| {k+ h \choose k} \frac{S_{k+h}(a)}{S_{k}(a)} \right| \leq \frac{(6 e C)^h }{h^{h/2}} .$$
\end{proof}

\section{Tail bounds under limited independence}
\label{sec:concentration}

In this section we work with the following setup: Let $X
=(X_1,\ldots,X_n)$ be a vector of real valued random variables so that
$\E[X_i] = 0$ for all $i \in [n]$.
Denote $\sigma_i^2 = \Var[X_i]$ and   
\[\sigma^2 = \sum_{i \in [n]} \sigma_i^2.\]  
The goal is proving a tail bound on the behaviour
of the symmetric functions under limited independence.

We start by obtaining tail estimates,
under full independence.
Let $\cal{U}$ denote the distribution over
$X=(X_1,\ldots,X_n)$ where $X_1,\ldots,X_n$ are independent.

\begin{lemma}
\label{lem:exp-kwise}
$\E_{X \in \cal{U}}[S^2_\ell(X)] \leq \frac{\sigma^{2\ell}}{\ell!}$.
\end{lemma}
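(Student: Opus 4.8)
The plan is to expand the square and use independence and mean-zero to kill all cross terms, leaving a symmetric polynomial in the variances, which is then controlled by a crude Maclaurin-type bound.

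First I would write
\[
S_\ell^2(X) = \sum_{\substack{T,T' \subseteq [n] \\ |T|=|T'|=\ell}} \prod_{i \in T} X_i \prod_{j \in T'} X_j,
\]
and take expectations term by term using full independence under $\cal{U}$. In the monomial indexed by $(T,T')$, every index in the symmetric difference $T \symd T'$ appears exactly once; since the $X_i$ are independent with $\E[X_i]=0$, any such monomial has expectation zero. Hence only the diagonal terms $T = T'$ survive, and for those $\E[\prod_{i \in T} X_i^2] = \prod_{i \in T} \sigma_i^2$ by independence. Therefore
\[
\E_{X \in \cal{U}}[S_\ell^2(X)] = \sum_{\substack{T \subseteq [n] \\ |T| = \ell}} \prod_{i \in T} \sigma_i^2 = S_\ell(\sigma_1^2,\ldots,\sigma_n^2).
\]

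Finally I would bound $S_\ell$ of the nonnegative vector $(\sigma_1^2,\ldots,\sigma_n^2)$. Writing the sum over ordered tuples of distinct indices and then dropping the distinctness constraint,
\[
S_\ell(\sigma_1^2,\ldots,\sigma_n^2) = \frac{1}{\ell!} \sum_{\substack{i_1,\ldots,i_\ell \in [n] \\ \text{distinct}}} \sigma_{i_1}^2 \cdots \sigma_{i_\ell}^2 \leq \frac{1}{\ell!}\left(\sum_{i \in [n]} \sigma_i^2\right)^\ell = \frac{\sigma^{2\ell}}{\ell!},
\]
which is the claimed inequality. (Alternatively one could invoke the Maclaurin inequality $S_\ell(y) \le (e/\ell)^\ell (S_1(y))^\ell$ quoted earlier, but the elementary counting bound already suffices and gives the cleaner constant.)

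There is no real obstacle here; the only point requiring minor care is the vanishing of cross terms, which uses that $|T| = |T'| = \ell$ forces $T \neq T' \Rightarrow T \symd T' \neq \emptyset$, so that at least one variable appears to the first power and independence plus $\E[X_i]=0$ applies. Everything else is a routine rearrangement.
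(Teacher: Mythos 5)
Your proof is correct and follows the paper's argument essentially verbatim: expand $S_\ell^2$, use mean-zero plus independence to kill all off-diagonal terms, and bound the resulting sum $\sum_{|T|=\ell}\prod_{i\in T}\sigma_i^2$ by $\frac{1}{\ell!}\bigl(\sum_i \sigma_i^2\bigr)^{\ell}$. The extra care you take in justifying the vanishing of cross terms (some index in $T \symd T'$ appears to the first power) is a fine elaboration of a step the paper leaves implicit.
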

\begin{proof}
Since the expectation of $X_i$ is zero for all $i \in [n]$,
\begin{align*}
\E[S^2_\ell(X)]
& =
\sum_{T,T' \subset [n]:|T|=|T'| = \ell}
\E\left[\prod_{t \in T} X_{t} \prod_{t' \in T'} X_{t'}\right]\\
& = \sum_{T \subset [n]:|T| = \ell}
\E\left[\prod_{t \in T} X^2_{t} \right]
 = \sum_{T \subset [n]:|T| = \ell} \prod_{t \in T} \sigma_t^2 \\
& \leq \frac{1}{\ell!}\left( \sum_{i \in [n]} \sigma_i^2 \right)^\ell
 = \frac{\sigma^{2\ell}}{\ell!} .
\end{align*}
\end{proof}

\begin{corollary}
\label{cor:k-wise}
For $t >0$ and $\ell \in [n]$,
by Markov's inequality,
\begin{align}
\label{eq:tail1}
\Pr_{X \in \cal{U}} \left[|S_\ell(X)| \geq 
\left( \frac{e^{1/2} t \sigma }{ \ell^{1/2} } \right)^{\ell} \geq \frac{(t\sigma)^\ell}{\sqrt{\ell!}}\right]
\leq \frac{1}{t^{2\ell}}.
\end{align}
If $2 e^{1/2} t\sigma \leq k^{1/2}$ then
by the union bound
\begin{align}
\label{eq:tail22}
\Pr_{X \in \cal{U}} \left[\sum_{\ell = k}^n |S_\ell(X)| \geq 2
\left( \frac{e^{1/2} t \sigma }{ k^{1/2} } \right)^{k} \right]
\leq \frac{1}{t^{2k} - t^{2(k-1)}}.
\end{align}
\end{corollary}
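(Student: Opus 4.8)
The statement to prove is Corollary~\ref{cor:k-wise}, which has two parts: a per-level tail bound via Markov applied to Lemma~\ref{lem:exp-kwise}, and a union bound over levels $\ell \geq k$. Let me sketch the proof.

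\textbf{Proof plan for Corollary~\ref{cor:k-wise}.}

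The plan is to first establish the single-level estimate \eqref{eq:tail1} by applying Markov's inequality to the second-moment bound from Lemma~\ref{lem:exp-kwise}, then obtain \eqref{eq:tail22} by summing the failure probabilities over $\ell \geq k$ and using that the threshold values form a geometric-type series so their sum is at most twice the first term.

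For the first part, Lemma~\ref{lem:exp-kwise} gives $\E_{X \in \U}[S_\ell^2(X)] \leq \sigma^{2\ell}/\ell!$, so Markov applied to the nonnegative random variable $S_\ell^2(X)$ yields $\Pr[S_\ell^2(X) \geq \lambda] \leq \sigma^{2\ell}/(\ell! \lambda)$ for any $\lambda > 0$. Choosing $\lambda = (t\sigma)^{2\ell}/\ell!$ gives $\Pr[|S_\ell(X)| \geq (t\sigma)^\ell/\sqrt{\ell!}] \leq t^{-2\ell}$. To get the clean form stated, I would use the standard inequality $\ell! \geq (\ell/e)^\ell$, so $(t\sigma)^\ell/\sqrt{\ell!} \leq (t\sigma)^\ell (e/\ell)^{\ell/2} = (e^{1/2}t\sigma/\ell^{1/2})^\ell$; hence the event $\{|S_\ell(X)| \geq (e^{1/2}t\sigma/\ell^{1/2})^\ell\}$ is contained in $\{|S_\ell(X)| \geq (t\sigma)^\ell/\sqrt{\ell!}\}$, which has probability at most $t^{-2\ell}$. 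This justifies the chain of inequalities displayed inside the probability in \eqref{eq:tail1}.

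For the second part, observe that under the hypothesis $2e^{1/2}t\sigma \leq k^{1/2}$, for every $\ell \geq k$ the ratio of consecutive thresholds is controlled: writing $r_\ell = (e^{1/2}t\sigma/\ell^{1/2})^\ell$, one checks $r_{\ell+1}/r_\ell \leq e^{1/2}t\sigma/(\ell+1)^{1/2} \cdot (\ell/(\ell+1))^{\ell/2}\cdot(\ell+1)^{1/2}/\ell^{1/2}$... more simply, since $e^{1/2}t\sigma \leq k^{1/2}/2 \leq \ell^{1/2}/2$ for $\ell \geq k$, each term $r_\ell \leq (1/2)^{\ell-k} r_k$ when compared to $r_k$ — indeed $r_\ell = (e^{1/2}t\sigma)^\ell/\ell^{\ell/2} \leq (e^{1/2}t\sigma)^\ell/k^{\ell/2} = (e^{1/2}t\sigma/k^{1/2})^\ell \leq (1/2)^{\ell-k}(e^{1/2}t\sigma/k^{1/2})^k \leq (1/2)^{\ell-k} r_k$. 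Thus if $|S_\ell(X)| < r_\ell$ for all $\ell \geq k$ then $\sum_{\ell=k}^n |S_\ell(X)| < r_k \sum_{\ell \geq k}(1/2)^{\ell-k} \leq 2 r_k = 2(e^{1/2}t\sigma/k^{1/2})^k$. Taking the contrapositive, the event in \eqref{eq:tail22} implies $|S_\ell(X)| \geq r_\ell$ for some $\ell \geq k$, and by the union bound together with \eqref{eq:tail1} its probability is at most $\sum_{\ell \geq k} t^{-2\ell} = t^{-2k}/(1 - t^{-2}) = 1/(t^{2k} - t^{2(k-1)})$ (this geometric sum converges precisely when $t > 1$, which is implied by the hypothesis since $\sigma > 0$ forces $t \geq k^{1/2}/(2e^{1/2}\sigma) > 1$ for the bound to be meaningful, and in any case the stated inequality is vacuous otherwise).

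The only genuinely delicate point — and the one I would be most careful about — is the geometric-decay bookkeeping in the second part: making sure the comparison $r_\ell \leq (1/2)^{\ell-k} r_k$ is valid uniformly for all $\ell \in \{k, \ldots, n\}$ using only the hypothesis $2e^{1/2}t\sigma \leq k^{1/2}$, and that the resulting sum of tail probabilities is correctly identified with $1/(t^{2k}-t^{2(k-1)})$. Everything else is a routine application of Lemma~\ref{lem:exp-kwise}, Markov, and the bound $\ell! \geq (\ell/e)^\ell$.
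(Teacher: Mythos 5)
Your proposal is correct and follows exactly the route the paper intends: Markov's inequality applied to the second-moment bound of Lemma~\ref{lem:exp-kwise} (with $\ell! \geq (\ell/e)^{\ell}$ to pass to the clean threshold), then the geometric decay of the thresholds $r_\ell \leq (1/2)^{\ell-k} r_k$ under $2e^{1/2}t\sigma \leq k^{1/2}$ plus a union bound whose geometric sum is $t^{-2k}/(1-t^{-2}) = 1/(t^{2k}-t^{2(k-1)})$. The paper gives no further detail than ``by Markov's inequality'' and ``by the union bound,'' so your write-up simply supplies the bookkeeping it elides, and does so correctly (including the observation that the bound is only meaningful for $t>1$).
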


We now consider limited independence.
\begin{lemma}
\label{lem:main}
Let $\cal{D}$ denote a distribution over $X = (X_1,\ldots,X_n)$ where $X_1,\ldots,X_n$ are
$(2k+2)$-wise independent. Let $t \geq 1$. Except with $\cal{D}$-probability
at most $2t^{-2k}$, 
the following bounds hold for all $\ell \in \{k,\ldots,n\}$:
\begin{align}
\label{eq:kwise-tail1}
|S_\ell(X)| \leq (6 e t\sigma)^{\ell}  
\left( \frac{k}{\ell} \right)^{\ell/2}  .
\end{align}
\end{lemma}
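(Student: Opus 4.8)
The plan is to combine the full-independence tail bound of Corollary~\ref{cor:k-wise} for the three ``anchor'' values $\ell \in \{k, k+1, k+2\}$ with the deterministic propagation inequality of Theorem~\ref{thm:GenBound}, which turns good control at two consecutive indices into control at all larger indices. First I would apply Lemma~\ref{lem:exp-kwise} in the form $\E_{\cal D}[S_\ell^2(X)] \le \sigma^{2\ell}/\ell!$ for $\ell \in \{k, k+1, k+2\}$: this expectation only involves monomials of degree $2\ell \le 2k+4$, and since we are only assuming $(2k+2)$-wise independence I need to be slightly careful --- in fact the second-moment computation in Lemma~\ref{lem:exp-kwise} only pairs up $T$ with itself, so $\E[S_\ell^2(X)]$ depends only on the joint distribution of $2\ell$-tuples; for $\ell = k+1$ this needs $(2k+2)$-wise independence (fine) and for $\ell = k+2$ it needs $(2k+4)$-wise. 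The cleaner route, and the one I expect the authors take, is to instead control $S_{k+1}$ and $S_{k+2}$ via $|S_{k+1}(X)/S_k(X)|$ and $|S_{k+2}(X)/S_k(X)|$ through Theorem~\ref{thm:GenBound}, whose hypotheses are statements about $\binom{k+1}{k}S_{k+1}/S_k$ and $\binom{k+2}{k}S_{k+2}/S_k$. So the real first step is: with $\cal D$-probability at least $1 - 2t^{-2k}$ (union bound over $\ell = k, k+1, k+2$, using Markov on each $S_\ell^2$ as in Corollary~\ref{cor:k-wise}), we have simultaneously $|S_\ell(X)| \le (e^{1/2}t\sigma/\ell^{1/2})^\ell$ for $\ell = k, k+1, k+2$, AND (on the same event, possibly shrinking it) a \emph{lower} bound on $|S_k(X)|$ is \emph{not} available --- so instead I want to phrase everything as: either $S_k(X) = 0$, in which case Fact~\ref{fact:2}-type reasoning needs care, or $S_k(X) \ne 0$ and the ratios are small.

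The key manoeuvre is to check that on the good event the hypotheses of Theorem~\ref{thm:GenBound} hold with an appropriate constant $C$. Taking $C = 6et\sigma/k^{1/2}$ (or a comparable value --- I would not optimise constants), I need $\bigl|\binom{k+1}{k}S_{k+1}(X)/S_k(X)\bigr| \le C$ and $\bigl|\binom{k+2}{k}S_{k+2}(X)/S_k(X)\bigr| \le C^2$. Here is where the assumption $16et\sigma \le 1$, equivalently $t\sigma$ small, does the work: the numerators are at most roughly $(t\sigma)^{k+1}/\sqrt{(k+1)!}$ and $(t\sigma)^{k+2}/\sqrt{(k+2)!}$ while we would like to divide by $|S_k(X)|$. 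The subtlety: $|S_k(X)|$ could itself be tiny (even zero), making the ratio large. I expect the resolution is a case split. If $|S_k(X)| \ge (t\sigma)^k/\sqrt{k!} \cdot (\text{something})$, i.e. $S_k$ is not abnormally small, then the ratios are controlled by a ratio of the moment bounds, which is $O(t\sigma/\sqrt{k}) = O(C)$ after absorbing binomial factors (note $\binom{k+1}{k} = k+1$ and $\binom{k+2}{k} = \binom{k+2}{2}$, both polynomial in $k$, and $1/\sqrt{(k+1)!/k!} = 1/\sqrt{k+1}$ kills them). If on the other hand $|S_k(X)|$ is smaller than the target value $(6et\sigma)^k (k/\ell)^{\ell/2}$ evaluated at $\ell = k$, namely $(6et\sigma)^k$, then --- I suspect --- one argues directly that all $|S_\ell(X)|$ for $\ell \ge k$ are also below $(6et\sigma)^\ell (k/\ell)^{\ell/2}$, perhaps by re-running the whole argument with $k$ replaced by a larger index where $S$ is non-negligible, or by a monotonicity/telescoping observation; alternatively the statement is simply vacuous-ish in that regime because $(6et\sigma)^\ell(k/\ell)^{\ell/2} \ge |S_\ell|$ might follow from $|S_k|$ small plus the structure of the derivative polynomial. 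I think the honest version is: WLOG pick the largest $k' \ge k$ with $S_{k'}(X) \ne 0$... no --- rather, apply Theorem~\ref{thm:GenBound} at index $k$ whenever $S_k(X) \ne 0$ and the two ratio hypotheses hold, and handle $S_k(X)=0$ separately using Fact~\ref{fact:2}.

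Granting the hypotheses of Theorem~\ref{thm:GenBound} with constant $C = 6et\sigma/k^{1/2}$... wait, I need to match the normalisation: Theorem~\ref{thm:GenBound} as stated uses a single constant $C$ with $|\binom{k+1}{k}S_{k+1}/S_k| \le C$, so let me instead set things so that $C$ absorbs the $\sqrt{k}$. Write $C$ such that the conclusion $\bigl|\binom{k+h}{k}S_{k+h}/S_k\bigr| \le (6eC/h^{1/2})^h$ rearranges, using $\binom{k+h}{k} \ge (k/h)^h$... actually $\binom{k+h}{k} \ge (k/h)^h$ isn't quite it either; one uses $\binom{k+h}{k}\ge ((k+h)/h)^h \ge (k/h)^h$ and $|S_k(X)| \le (e^{1/2}t\sigma/k^{1/2})^k$, giving $|S_{k+h}(X)| \le (6eC/h^{1/2})^h (h/k)^h (e^{1/2}t\sigma/k^{1/2})^k$; then setting $\ell = k+h$ and choosing $C$ proportional to $t\sigma$ one matches $(6et\sigma)^\ell (k/\ell)^{\ell/2}$ after a computation comparing $(h/k)^h k^{-k/2}$ with $k^{\ell/2}\ell^{-\ell/2}$. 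I would carry out this bookkeeping at the end; it is routine but fiddly, and the constant $16e$ in the hypothesis $16et\sigma \le 1$ is presumably exactly what makes $C$ small enough for Theorem~\ref{thm:GenBound} to apply (recall its conclusion is only useful when $6eC/h^{1/2} < 1$, i.e.\ for the geometric decay to kick in, though the statement holds unconditionally). \textbf{The main obstacle} I anticipate is the case $|S_k(X)|$ abnormally small (including zero), where dividing by it is illegitimate; the rest is a union bound plus algebra. I would resolve it by noting that if $S_k(X)=0$ then by Fact~\ref{fact:2} either $S_{k+1}(X)=0$ too (and then all later $S_\ell$ vanish, done) or $S_{k+1}(X)\ne 0$, and then one restarts the argument at index $k+1$ --- but this requires the good event to also control $S_{k+3}$, so the union bound should really be over $\ell \in \{k,k+1,k+2,k+3\}$ or one phrases the good event as a bound on $|S_\ell(X)|$ for \emph{all} $\ell$ simultaneously (which is exactly what Equation~\eqref{eq:tail22} of Corollary~\ref{cor:k-wise} gives us for the independent model, and which the final statement is the limited-independence analogue of).
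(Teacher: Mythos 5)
You have assembled the right ingredients (second moments of $S_k$ and $S_{k+1}$ via Lemma~\ref{lem:exp-kwise}, Markov's inequality, and propagation via Theorem~\ref{thm:GenBound}), and you have correctly located the crux: Theorem~\ref{thm:GenBound} must be anchored at an index where the symmetric polynomial is \emph{not} small, and nothing in the conditioned event forces $|S_k(X)|$ to be bounded below. But none of the resolutions you sketch actually closes this gap, and the one the paper uses goes in the opposite direction from all of your suggestions. If $S_k(x)=0$, Fact~\ref{fact:2} gives you nothing unless $S_{k+1}(x)=0$ as well; restarting at $k+1$ (or any larger index) both recreates the same smallness problem one step up and requires moment control on $S_{k+3},S_{k+4},\dots$, which $(2k+2)$-wise independence does not supply --- as you yourself noticed when worrying about $\E[S_{k+2}^2]$. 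And the regime where $|S_k|$ is nonzero but abnormally small is not touched by any of your case splits.

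The paper's fix is to move the anchor \emph{down}, not up. After conditioning on the event \eqref{eq:condition1} (which involves only $\ell=k$ and $\ell=k+1$, so $(2k+2)$-wise independence suffices --- your instinct to avoid moments of $S_{k+2}$ was right), mark each $j\in\{0,\dots,k+1\}$ as high if $|S_j(x)|\ge (t\sigma)^j/\sqrt{j!}$ and low if the reverse inequality holds. Since $S_0(x)=1$ is high while $k$ and $k+1$ are low, taking $k_0$ to be the largest high index in $\{0,\dots,k-1\}$ produces a triple $k_0,k_0+1,k_0+2$ with the first high and the next two low. One then defines $\gamma\le t\sigma$ as the smallest value for which $|S_{k_0+1}(x)|\le |S_{k_0}(x)|\gamma/\sqrt{k_0+1}$ and $|S_{k_0+2}(x)|\le|S_{k_0}(x)|\gamma^2/\sqrt{(k_0+1)(k_0+2)}$, applies Theorem~\ref{thm:GenBound} at $k_0$ with $C=\gamma\sqrt{k_0+1}$ (legitimate because $S_{k_0}(x)\neq 0$ there), and converts the resulting bound on $|S_{k_0+h}(x)/S_{k_0}(x)|$ into the claimed bound on $|S_\ell(x)|$ using the equality case in the definition of $\gamma$ together with the binomial-coefficient estimates you anticipated. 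This deterministic pigeonhole step is the missing idea; without it, or an equivalent device for handling small $|S_k|$, your argument does not go through.
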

\begin{proof}
In the following the underlying probability distribution over $X$ is $\cal{D}$.
By Lemma \ref{lem:exp-kwise}, for $i \in \{k,k+1\}$,
\begin{align*}
\E[S^2_{i}(X)] & \leq \frac{\sigma^{2i}}{i!} .
\end{align*}
Hence by Markov's inequality,
\begin{align*}
\Pr\left[|S_{i}(X)| \geq
  \frac{(t\sigma)^i}{\sqrt{i!}}\right] & \leq t^{-2 i}.  
\end{align*}
From now on, condition on the event that 
\begin{align}
\label{eq:condition1}
|S_{k}(X)| \leq \frac{(t\sigma)^k}{\sqrt{k!}} \ \text{and}
\ |S_{k+1}(X)| \leq \frac{(t\sigma)^{k+1}}{\sqrt{(k+1)!}} ,
\end{align}
which occurs with probability at least $1-2t^{-2k}$. 
Fix $x = (x_1,\ldots, x_n)$ such that Equation \eqref{eq:condition1}
holds.

We claim that there must exist $k_0 \in \{0,\ldots,k-1\}$ for which the
following bounds hold:
\begin{align}
\label{eq:decay1}
|S_{k_0}(x)| & \geq \frac{(t\sigma)^{k_0}}{\sqrt{k_0!}} , \\
\label{eq:decay2}
|S_{k_0+1}(x)| & \leq \frac{(t\sigma)^{k_0+1}}{\sqrt{(k_0+1)!}} ,\\
\label{eq:decay3}
|S_{k_0 + 2}(x)| & \leq \frac{(t\sigma)^{k_0 + 2}}{\sqrt{(k_0 + 2)!}} .
\end{align}
To see this, mark point $j \in \{0,\ldots,k+1\}$ as {\em high} if 
\begin{align*}
|S_j(x)| & \geq \frac{(t\sigma)^j}{\sqrt{j!}}
\end{align*}
and {\em low} if 
\begin{align*}
|S_j(x)| & \leq \frac{(t\sigma)^j}{\sqrt{j!}} .
\end{align*}
A point is marked both high and low if equality holds. Observe that
$0$ is marked high (and low) since $S_0(x) =1$ and $k$ and $k +1$
are marked low by Equation \eqref{eq:condition1}. This implies the
existence of a triple $k_0,k_0+1,k_0+2$ where the first point is
high and the next two are low.

Let $\gamma > 0$ be the smallest number so that the following inequalities hold:
\begin{align}
\label{eq:def-gamma1}
|S_{k_0 +1}(x)| & \leq |S_{k_0}(x)|
\frac{\gamma}{\sqrt{k_0 +1}}, \\
\label{eq:def-gamma2}
|S_{k_0 +2}(x)| & \leq |S_{k_0}(x)|
\frac{\gamma^2}{\sqrt{(k_0 +1)(k_0 +2)}} .
\end{align}
By definition, 
one of Equations \eqref{eq:def-gamma1} and \eqref{eq:def-gamma2} holds with
equality so
\begin{align*}
|S_{k_0}(x)| = \max\left\{\frac{|S_{k_0
    +1}(x)|\sqrt{k_0 +1}}{\gamma}, \frac{|S_{k_0
    +2}(x)|\sqrt{(k_0 +1)(k_0 +2)}}{\gamma^2}\right\} .
\end{align*}
Observe further that $\gamma \leq t\sigma$ by Equations \eqref{eq:decay1},
\eqref{eq:decay2} and \eqref{eq:decay3}.
Combining this with the bounds in Equations
\eqref{eq:decay2} and \eqref{eq:decay3}
\begin{align}
\label{eq:bound-sk}
|S_{k_0}(x)| \leq \max\left\{\frac{(t\sigma)^{k_0+1}}{\gamma\sqrt{k_0!}}, \frac{(t\sigma)^{k_0+2}}{\gamma^2\sqrt{k_0!}} \right\} = \frac{(t\sigma)^{k_0+2}}{\gamma^2\sqrt{k_0!}} .
\end{align}

Equations \eqref{eq:def-gamma1}
and \eqref{eq:def-gamma2} let us apply Theorem \ref{thm:GenBound}
with $C = \gamma \sqrt{k_0+1}$ and $h \geq 3$ 
to get
\begin{align*}
\left|\frac{S_{k_0+h}(x)}{S_{k_0}(x)}\right| \leq (6 e \gamma)^h
\frac{(k_0 +1)^{h/2}}{h^{h/2}{ k_0 + h \choose k_0}}.
\end{align*}
Bounding $|S_{k_0}|$ by Equation \eqref{eq:bound-sk}, we get
\begin{align*}
|S_{k_0+h}(x)| & \leq (6 e \gamma)^h
\frac{(k_0 +1)^{h/2}}{h^{h/2}{ k_0 + h \choose k_0}}
\frac{(t\sigma)^{k_0+2}}{\gamma^2\sqrt{k_0!}}
 \leq (6 e t\sigma)^{k_0 + h}\frac{(k_0 + 1)^{h/2}}{h^{h/2}\sqrt{k_0!}{ k_0 + h \choose h}} .
\end{align*}
Since
\begin{align*}
{k_0 + h \choose h} \geq 
\max \left\{ \left(\frac{k_0 + h}{k_0}\right)^{k_0},
 \left(\frac{k_0 + h}{h}\right)^{h} \right\}
 \geq \frac{(k_0 + h)^{(k_0+h)/2}}{k_0^{k_0/2} h^{h/2}} ,
\end{align*}
we have
\begin{align*}
\frac{(k_0+1)^{h/2}}{h^{h/2}\sqrt{k_0!}{ k_0 + h \choose h}}
& \leq
\left( \frac{k_0+1}{h} \right)^{h/2} 
\frac{k_0^{k_0/2} h^{h/2}}{(k_0 + h)^{(k_0+h)/2}} 
 \leq
\left( \frac{k_0+1}{k_0 + h} \right)^{(k_0+h)/2} .
\end{align*}
Therefore, denoting $\ell = k_0 + h$, since $k_0 +1 \leq k$,
\begin{align*}
|S_{\ell}(x)| & \leq (6 e t\sigma)^{\ell}  
\left( \frac{k}{\ell} \right)^{\ell/2}  .
\end{align*}

\end{proof}

\begin{proof}[Proof of Theorem \ref{thm:main}]
As in Lemma~\ref{lem:main},
fix $x = (x_1,\ldots, x_n)$ such that Equation \eqref{eq:condition1} holds
(the random vector $X$ has this property with $\cal{D}$-probability
at least $1-2t^{-2k}$).
By the proof of lemma, since by assumption $6 e t\sigma < 1/2$, 
\begin{align}
\label{eqn:Endsum}
\sum_{\ell=k}^n|S_\ell(x)| \leq \frac{(t\sigma)^k}{k!} +
\frac{(t\sigma)^{k+1}}{\sqrt{(k+1)!}} + \sum_{\ell = k+2}^n (6 e t\sigma)^\ell 
\left( \frac{k}{\ell} \right)^{\ell/2}
\leq 2(6 e t\sigma)^k.
\end{align}
\end{proof}

\subsection{On the tightness of the tail bounds}
\label{sec:lower}

We conclude by showing that $(2k +2)$-wise independence is
insufficient to fool $|S_\ell|$ for $\ell > 2k +2$ in expectation. We
use a modification of a simple proof due to Noga Alon of the $\Omega(n^{k/2})$
lower bound on the support size of a $k$-wise independent
distribution on $\{-1,1\}^n$, which was communicated to us by Raghu Meka.

For this section, let $X_1,\ldots,X_n$ be 
so that each $X_i$ is uniform over
$\{-1,1\}$. 
Thus $\sigma^2 = \sum_i \Var[X_i] = n$.
By Lemma \ref{lem:exp-kwise}, we have
\begin{align}
\label{eq:k-wise-upper}
\E_{X \in \cal{U}}[|S_\ell(X)|] \leq \left(\E_{X \in \cal{U}}[S^2_\ell(X)]\right)^{1/2} 
\leq \frac{n^{\ell/2}}{\sqrt{\ell!}}.
\end{align}
In contrast we have the following:

\begin{lemma}
\label{lem:lower bound}
There is a $(2k+2)$-wise independent distribution 
on $X = (X_1,X_2,\ldots,X_n)$ 
in $\{-1,1\}^n$ such that for every $\ell \in [n]$,
\begin{align*}
\Pr_{X \in \cal{D}}\left[|S_\ell(X)| \geq {n \choose \ell}\right] \geq \frac{1}{3n^{k+1}} .
\end{align*}
Specifically,
\begin{align}
\label{eq:k-wise-lower2}
\E_{X \in \cal{D}}[|S_\ell(X)|] \geq \frac{{n \choose \ell}}{3n^{k+1}} .
\end{align}

\end{lemma}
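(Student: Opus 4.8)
The plan is to exhibit one $(2k+2)$-wise independent distribution $\mathcal{D}$ on $\{-1,1\}^n$ that puts probability at least $\tfrac1{3n^{k+1}}$ on the all-ones vector $\mathbf{1}=(1,\dots,1)$; the two displayed bounds then fall out formally. Indeed, for any $x\in\{-1,1\}^n$ the triangle inequality gives $|S_\ell(x)|\le\binom n\ell$, while $S_\ell(\mathbf{1})=\binom n\ell$ for every $\ell$; hence $\{X=\mathbf{1}\}\subseteq\{|S_\ell(X)|\ge\binom n\ell\}$, so $\Pr_{\mathcal{D}}[|S_\ell(X)|\ge\binom n\ell]\ge\Pr_{\mathcal{D}}[X=\mathbf{1}]\ge\tfrac1{3n^{k+1}}$, and the second bound follows from $\E_{\mathcal{D}}[|S_\ell(X)|]\ge\binom n\ell\cdot\Pr_{\mathcal{D}}[|S_\ell(X)|\ge\binom n\ell]$ since $|S_\ell|\ge0$. (One can check $\mathbf{1}$ is essentially forced: for $1\le\ell\le n-1$, equality $|S_\ell(x)|=\binom n\ell$ requires all $\binom n\ell$ monomials $\prod_{i\in T}x_i$ to share a sign, hence $x=\pm\mathbf{1}$, so any distribution witnessing the lemma simultaneously for all $\ell$ must place mass $\ge\tfrac1{3n^{k+1}}$ on $\{\pm\mathbf{1}\}$.)

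To build $\mathcal{D}$, identify $\{-1,1\}^n$ with $\F_2^n$ via $v\mapsto((-1)^{v_1},\dots,(-1)^{v_n})$, and for a linear code $W\subseteq\F_2^n$ let $\mathcal{D}_W$ be the image of the uniform distribution on $W$. A short computation shows that for $S\subseteq[n]$ the joint moment $\E_{\mathcal{D}_W}[\prod_{i\in S}X_i]$ equals $|W|^{-1}\sum_{v\in W}(-1)^{\sum_{i\in S}v_i}$, which vanishes unless the characteristic vector of $S$ lies in $W^\perp$. Hence $\mathcal{D}_W$ is $(2k+2)$-wise independent exactly when $W^\perp$ has minimum distance at least $2k+3$, and then $\mathcal{D}_W$ is uniform over its support, which contains $\mathbf{1}$ (the image of the zero word). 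It remains to pick such a $W^\perp$ with $\dim W$ small, i.e.\ a length-$n$ code of minimum distance $\ge 2k+3$ and codimension about $(k+1)\log_2 n$; a (shortened) BCH code does exactly this, and gives $|W|=2^{\dim W}\le 3n^{k+1}$. (Concretely, for $n=2^m-1$ one gets $|W|=(n+1)^{k+1}\le 3n^{k+1}$, using $(1+1/n)^{k+1}\le e<3$ when $k+1\le n$; when $2k+2\ge n$ the construction degenerates to the uniform distribution, which works; and for the remaining $n$ one shortens a slightly longer BCH code, at the cost of an absolute constant.) Then $\mathcal{D}:=\mathcal{D}_W$ satisfies $\Pr_{\mathcal{D}}[X=\mathbf{1}]=|W|^{-1}\ge\tfrac1{3n^{k+1}}$, as needed.

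The step carrying the content is getting the exponent $k+1$ rather than $2k+2$. A hands-on construction --- perturbing the uniform density by a high-degree term that annihilates all Fourier coefficients of degrees $1,\dots,2k+2$ and inflates the weight of $\mathbf{1}$, or using the dual of a Reed--Solomon type code --- is forced to cancel $\approx n^{2k+2}$ independent low-degree constraints by design and only reaches $\Pr[X=\mathbf{1}]\approx n^{-(2k+2)}$. The conjugate-root (Frobenius) structure of BCH codes is precisely what halves the exponent, and the resulting support size $\Theta(n^{k+1})$ is best possible: it matches Alon's $\Omega(n^{k+1})$ lower bound on the support of a $(2k+2)$-wise independent distribution on $\{-1,1\}^n$, and the construction can be read as the converse of that counting argument. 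Granting this coding-theoretic ingredient the remainder is the bookkeeping above; the only mildly delicate point is pinning the constant at $3$ uniformly over $n$.
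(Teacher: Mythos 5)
Your proposal is correct and follows essentially the same route as the paper: both arguments reduce the lemma to exhibiting a $(2k+2)$-wise independent distribution of support size $O(n^{k+1})$ whose support contains the all-ones vector, and then use $S_\ell(1,\dots,1)=\binom{n}{\ell}$ together with $|S_\ell(x)|\le\binom{n}{\ell}$ on $\{-1,1\}^n$. The only cosmetic difference is that the paper cites the Alon--Babai--Itai construction as a black box and translates its support to contain $(1,\dots,1)$, whereas you unfold that construction via duals of BCH codes and observe that the zero codeword already maps to the all-ones vector.
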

\begin{proof}
Let $\cal{D}$ be a $(2k+2)$-wise independent distribution on $\{-1,1\}^n$ that is uniform
over a set $D$ of size $2(n+1)^{k+1} \leq 3n^{k+1}$. Such distributions are known to exist \cite{ABI}. 
Further, by translating the support by some fixed vector if needed, we
may assume that $(1,1,\ldots,1) \in D$. It is easy to see that every
such translate also induces a $(2k+2)$-wise independent distribution. 
The claim holds since
$S_\ell(1,\ldots,1) = {n \choose \ell}$.
\end{proof}

When e.g.\ $k = O(\log n)$, which is often the case of interest, 
for $2k+3 \leq \ell \leq n - (2k +3)$, the RHS of \eqref{eq:k-wise-lower2} is much larger than the bound guaranteed by Equation \eqref{eq:k-wise-upper}.
The tail bound provided by Lemma \ref{lem:main} 
can not therefore be extended to a satisfactory bound on the expectation.
Furthermore, applying Lemma \ref{lem:main} with 
\[ t = \frac{1}{6e}\sqrt{\frac{n}{\ell k}} \]
implies that for any $(2k +2)$-wise independent distribution, 
\begin{align*}
\Pr\left[|S_\ell(X)| \geq {n \choose \ell}\right] \leq
\Pr\left[|S_\ell(X)| \geq
  (6et\sqrt{n})^\ell\left(\frac{k}{\ell}\right)^{\ell/2} \right] \leq
2\left(\frac{36e^2 k\ell}{n}\right)^k . 
\end{align*}
When $k \ell = o(n)$, 
this is at most $O(n^{-k + o(1)})$.
Comparing this to the bound given in Lemma~\ref{lem:lower bound},
we see that the bound provided by Lemma~\ref{lem:main} is 
nearly tight.

\section{Limited independence fools products of bounded variables}
\label{sec:products}

In this section we work with the following setup. We have $n$ random
variables $X_1,\ldots, X_n$ each distributed in the interval
$[-1,1]$. Let $\mu_i$ and $\sigma_i^2$ denote the mean and variance of
$X_i$, and let $\sigma^2 = \sum_{i=1}^n \sigma_i^2$. We will typically
use $\U$ to denote the distribution where the $X_i$s are fully
independent, and $\D$ to denote distributions with limited independence.

\begin{theorem}
\label{thm:product}
There exist  constants $c, c' > 0$ such that under any $ck$-wise independent
distribution $\D$, 
\begin{align}
\label{eq:k-wise2}
\left|\E_\D[\prod_{i=1}^n X_i] - \prod_{i=1}^n\mu_i \right| \leq (c'\sigma)^k. 
\end{align}
\end{theorem}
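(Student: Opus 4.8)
The plan is to expand the product around the vector of means, truncate the expansion at degree $\Theta(k)$, and control the tail via Theorem~\ref{thm:main}, after first quarantining the coordinates whose mean is too small for the substitution $X_i=\mu_i(1+Z_i)$ to behave well. First I would observe that we may assume $\sigma<\sigma_0$ for a suitably small absolute constant $\sigma_0$: if $\sigma\ge\sigma_0$ then $(c'\sigma)^k\ge 2$ once $c'$ is large enough, and the claim is trivial since $|\E_\D[\prod_i X_i]|$ and $|\prod_i\mu_i|$ are both at most $1$. Fix $c=12$, set $k'=3k$, and split $[n]=A\cup B$ with $A=\{i:|\mu_i|\ge\sigma^{1/4}\}$. \textbf{Case 1: $|B|\ge 4k$.} Here $|\prod_i\mu_i|\le\prod_{i\in B}|\mu_i|\le(\sigma^{1/4})^{4k}=\sigma^k$, and for $i\in B$ one has $\E[X_i^2]=\mu_i^2+\sigma_i^2\le\sqrt\sigma+\sigma^2\le 2\sqrt\sigma$. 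Picking $B'\subseteq B$ with $|B'|=4k$ and using that $\D$ is $8k$-wise independent, $\E_\D[\prod_{i\in B'}X_i^2]=\prod_{i\in B'}\E[X_i^2]\le(2\sqrt\sigma)^{4k}$, so $|\E_\D[\prod_i X_i]|\le\E_\D[\prod_{i\in B'}|X_i|]\le(2\sqrt\sigma)^{2k}=(4\sigma)^k$ (using $|X_i|\le1$ off $B'$), and the left side is at most $(5\sigma)^k$.

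\textbf{Case 2: $|B|<4k$.} I would condition on $(X_j)_{j\in B}$. Since $|B|<ck$, the partial product $\prod_{j\in B}X_j$ has expectation exactly $\prod_{j\in B}\mu_j$, so
$\E_\D[\prod_i X_i]-\prod_i\mu_i=\E_\D\!\big[(\prod_{j\in B}X_j)(\E_{\D'}[\prod_{i\in A}X_i]-\prod_{i\in A}\mu_i)\big]$,
where $\D'$ denotes the conditional distribution of $(X_i)_{i\in A}$ given $(X_j)_{j\in B}$ --- which is $(ck-|B|)$-wise (hence at least $8k$-wise) independent with unchanged marginals, and now $|\mu_i|\ge\sigma^{1/4}$ for all $i\in A$. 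Bounding $|\prod_{j\in B}X_j|\le1$, it suffices to bound $|\E_{\D'}[\prod_{i\in A}X_i]-\prod_{i\in A}\mu_i|$ uniformly over the conditioning. Write $X_i=\mu_i(1+Z_i)$, so $\E[Z_i]=0$ and $\tau^2:=\sum_{i\in A}\Var[Z_i]=\sum_{i\in A}\sigma_i^2/\mu_i^2\le\sigma^{-1/2}\sigma^2=\sigma^{3/2}\le1$. Then $\prod_{i\in A}X_i=(\prod_{i\in A}\mu_i)\sum_{\ell\ge0}S_\ell(Z)$; let $P=(\prod_{i\in A}\mu_i)\sum_{\ell=0}^{k'}S_\ell(Z)$ be the degree-$k'$ truncation. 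Since $\E_\U[S_\ell(Z)]=0$ for $1\le\ell\le k'$ and $\D'$ is $k'$-wise independent, $\E_{\D'}[P]=\prod_{i\in A}\mu_i$, hence $\E_{\D'}[\prod_{i\in A}X_i]-\prod_{i\in A}\mu_i=\E_{\D'}[\prod_{i\in A}X_i-P]$.

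To estimate $\E_{\D'}[|\prod_{i\in A}X_i-P|]$, apply Theorem~\ref{thm:main} to $Z$ at threshold $k'$ with $t\ge1$ and $16et\tau\le1$: outside an event $\bar G$ of probability $\le 2t^{-2k'}$ one has $\sum_{\ell\ge k'}|S_\ell(Z)|\le 2(6et\tau)^{k'}$, so on $G$, $|\prod_{i\in A}X_i-P|=|\prod_{i\in A}\mu_i|\cdot|\sum_{\ell>k'}S_\ell(Z)|\le 2(6et\tau)^{k'}$. On $\bar G$ I would use $|\prod_{i\in A}X_i|\le1$ together with the second-moment bound $\E_{\D'}[P^2]=\E_\U[P^2]=(\prod_{i\in A}\mu_i)^2\sum_{\ell=0}^{k'}\E_\U[S_\ell^2(Z)]\le e$ --- valid because $P^2$ has degree $2k'$, $\D'$ is $2k'$-wise independent, the cross terms vanish by zero mean, and Lemma~\ref{lem:exp-kwise} with $\tau^2\le1$ bounds the sum --- so by Cauchy--Schwarz $\E_{\D'}[|P|\ind_{\bar G}]\le\sqrt e\,\sqrt{\Pr[\bar G]}$. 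Altogether
\[
\Big|\E_{\D'}\Big[\prod_{i\in A}X_i\Big]-\prod_{i\in A}\mu_i\Big|\ \le\ 2(6et\tau)^{k'}+\Pr[\bar G]+\sqrt e\,\sqrt{\Pr[\bar G]}\ \le\ 2(6et\tau)^{k'}+5t^{-k'},
\]
and taking $t=(6e\tau)^{-1/2}$ --- which lies in $[1,1/(16e\tau)]$ because $\tau\le\sigma^{3/4}$ is tiny --- yields $7(6e\tau)^{k'/2}\le 7(6e)^{3k/2}\sigma^{9k/8}\le(c'\sigma)^k$ for a suitable absolute $c'$. Combining the two cases proves the theorem.

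The step I expect to be the main obstacle is the contribution of the bad event $\bar G$ in Case 2. The coefficients of the truncation $P$ involve $Z_i=(X_i-\mu_i)/\mu_i$, which may be as large as $\sigma^{-1/4}$ in absolute value, so $P$ admits no useful pointwise bound and a crude estimate on $\bar G$ would cost a factor of order $n^{\Theta(k)}$, which destroys the bound. The resolution is to never bound $|P|$ pointwise but to exploit that its \emph{second} moment under $\D'$ is only $O(1)$ --- which is exactly what the small-total-variance hypothesis buys us via Lemma~\ref{lem:exp-kwise} --- and to combine this with Cauchy--Schwarz and the trivial bound $|\prod_{i\in A}X_i|\le1$. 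A secondary point requiring care is disposing of the small-mean coordinates cleanly: conditioning on them is legitimate precisely because any product of fewer than $ck$ of the $X_i$ already has the exact expectation under $ck$-wise independence, so they contribute nothing to the error.
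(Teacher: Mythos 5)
Your proposal is correct and follows essentially the same route as the paper's proof: isolate the coordinates with small mean (the paper uses threshold $\sqrt{\sigma}$ and shows the product is negligible if there are at least $2k$ of them), condition on them otherwise, substitute $X_i=\mu_i(1+Z_i)$ so the total variance of $Z$ drops to a power of $\sigma$, truncate the symmetric-function expansion at degree $\Theta(k)$ (the paper uses $4k$, you use $3k$), and control the tail via Theorem~\ref{thm:main} together with the $\E[P^2]=O(1)$ Cauchy--Schwarz argument on the bad event. The only (minor) divergence is in the many-small-means case, where your second-moment bound $\E_\D[\prod_{i\in B'}X_i^2]=\prod_{i\in B'}\E[X_i^2]$ is a slightly more careful way of showing $\E_\D[|\prod_i X_i|]$ is small than the paper's direct appeal to $\E_\D[\prod_{i\in H}X_i]=\prod_{i\in H}\mu_i$.
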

\begin{proof}
Define $H \subset [n]$ to be the set of indices such that $|\mu_i| \leq
\sqrt{\sigma}$. Note that if $H \geq 2k$, then we are done since if $c
\geq 2$, then
\[
\abs{\E_\D[\prod_{i \in H} X_i]} = \prod_{i \in H}\abs{\mu_i} \leq \sqrt{\sigma}^{2k} \leq \sigma^k
\]
Further, since the variables are bounded in $[-1,1]$, we have 
\[ \abs{\prod_{i \in n}X_i} \leq \abs{\prod_{i \in H}X_i} \]
hence
\[ \abs{\E_\D[\prod_{i \in [n]} X_i]} \leq \abs{\E_\D[\prod_{i \in H}X_i]} \leq \sigma^k.\]
The same bound also holds under $\U$, hence 
\[ \abs{\E_\D[\prod_{i \in [n]} X_i] - \E_\U[\prod_{i \in [n]}X_i]} \leq 2\sigma^k.\]

So now assume that $|H| \leq 2k$. Let $T = H \setminus [n]$. 
Even after conditioning on the outcome of variables in $H$, the
resulting distribution on $T$ is $(c-2)k = c''k$-wise independent. Since the
product of variables in $H$ has absolute value at most $1$, it suffices
to show that for a $c''k$-wise independent distribution $\D$, 
\[ \left| \E_\D[\prod_{i \in T} X_i] - \E_\U[\prod_{i \in T}X_i]\right| \leq 2\sigma^k.\]
For ease of notation, we shall assume that $T = [m]$ for some $m \leq
n$. We may assume that $m >c''k$ else there is nothing to prove.

Let us write $X_i = \mu_i(1 + Z_i)$, so that $Z_i$ has mean $0$ and
variance $\sigma_i^2/\mu_i^2$. We write
\[ \prod_{i \in [m]}X_i = \prod_{i \in [m]}\mu_i(1 +Z_i) = \prod_{i \in
  [m]}\mu_i\left(\sum_{\ell \leq m}S_\ell(Z_1,\ldots,Z_m)\right) \].
Let us define the functions 
\begin{align*} 
P(Z) & = \prod_{i \in [m]}\mu_i(1 +Z_i) \\
P'(Z) & = \prod_{i \in [m]}\sum_{\ell \leq 4k}S_\ell(Z_1,\ldots,Z_m).
\end{align*}
We will prove the following claim.
\begin{claim}
\label{claim:tech}
For a $c''k$-wise independent distribution $\D$, 
\[ \abs{\E_\D[P(Z)] - \E_\D[P'(Z)]} \leq (c'\sigma)^k/2.\]
\end{claim}

We first show how to finish the proof of Theorem \ref{thm:product}
with this claim. We have
\begin{align*}
\abs{\E_\D[P(Z)] - \E_\U[P(Z)]} & \leq \abs{\E_\D[P(Z)] - \E_\D[P'(Z)]} 
+  \abs{\E_\U[P(Z)] - \E_\U[P'(Z)]}\\
& \phantom{=}  + \abs{\E_\D[P'(Z)] - \E_\U[P'(Z)]} 
\end{align*}
The first two are bounded by $(c'\sigma^k)/2$ by the claim, and the last is
$0$ since $c''k$-wise independence fools degree $4k$ polynomials for $c''
> 4$.

\begin{proof}[Proof of Claim \ref{claim:tech}]
Recall that the $X_i$s for $i \in [m]$ have expectation $\mu_i$ where $|\mu_i| \geq
\sqrt{\sigma}$. We let $X_i = \mu_i(1 + Z_i)$, where $Z_i$ has mean $0$ and
variance $\bar{\sigma}_i^2$ where
\[ \bar{\sigma}_i^2 = \frac{\sigma_i^2}{\mu_i^2} \leq \frac{\sigma_i^2}{\sigma}.\]
Hence the total variance of the $Z_i$s can be bounded by
\[ \sig^2 \leq \sum_{i \in T} \frac{\sigma_i^2}{\sigma} \leq \sigma. \]

Writing $Z = (Z_1,\ldots,Z_m)$ we have
\[ P(Z) -P'(Z) = \sum_{\ell=4k +1}^m |S_\ell(Z)|.\]
Let $G$ denote the event that $|P(Z) - P'(Z)| \leq 2(6 e \sqrt{\sig})^{4k}$.
Letting $t = 1/\sqrt{\sig}$ and applying Theorem \ref{thm:main}, for
$c'' > 8k +2$
\begin{align}
\label{eq:main}
\E_\D[\ind(\neg G)]  \leq 2t^{-8k} = 2\sig^{4k}.
\end{align}

Since $\E[Z_i] =0$ for all $i$ it follows that under $c''k$-wise independence,
\begin{align}
\label{eq:pkZ}
\E[P_k(Z_1,\ldots,Z_n)^2 ] & \leq
\sum_{i=0}^{4k}\E[S_i(Z_1,\ldots,Z_n)^2] \leq \sum_{i=0}^{4k}\frac{\sig^{2i}}{i!} \leq 2.
\end{align}

We now write
\begin{align*} 
\E[P(Z) - P'(Z)] = \E[(P(Z) - P'(Z))\mathbbm{1}(G)] + \E[(P(Z) - P'(Z))\mathbbm{1}(\neg G)] .
\end{align*}
Equation \eqref{eq:tail-sum} implies
\begin{align*}
|\E[(P(Z) - P'(Z))\mathbbm{1}(G)]| \leq 2(6e\sqrt{\sig})^{4k} .
\end{align*}
For the second term,
\begin{align*}
|\E[(P(Z) - P'(Z))\mathbbm{1}(\neg G)]|  \leq |\E[P(Z)\mathbbm{1}(\neg G)]| + |\E[P'(Z)\mathbbm{1}(\neg G)]|
\end{align*}
Note that $0 \leq P(Z) \leq 1$. Also note that $\E[P_k(Z)^2] \leq 2$ by Equation \eqref{eq:pkZ}. 
So we can bound the RHS using Holder's inequalities by
\begin{align*}
|\E[\mathbbm{1}(\neg G)]| + |\E[P_k(A)^2]^{1/2} \cdot \E[\mathbbm{1}(\neg
  G)]^{1/2}| \leq \sig^{4k} + \sqrt{2}\sig^{2k} \leq 2\sig^{2k}.
\end{align*}
Hence overall we have
\[ \E[P(Z) - P'(Z)] = 2(6e\sqrt{\sig})^{4k} + 2\sig^{2k} \leq (c'\sigma)^k/2\]
\end{proof}
\end{proof}

\section{Analyzing the \cite{GMRTV} generator}
\label{sec:gmr}

Gopalan et al.~\cite{GMRTV} proposed and analyzed a $\prg$
for combinatorial rectangles, which we denote by $\gmr$.
In this section, we provide a different analysis of their construction,
which is based on our results concerning the symmetric
polynomials. Our analysis is simpler and follows the intuition  that
products of low variance events are easy to fool using limited
independence. It also improves one their seedlength in the dependence
on $n, \delta$ (see the discussion following Theorem \ref{thm:gmr}).

Let $\U$ denote the uniform distribution on $[m]^n$,
and let $\D$ be a distribution on $[m]^n$.
For $x \in [m]^n$ and $K \subseteq [n]$, let $x_K = (x_i : i \in K)$. 
We sometimes abuse notation
and write $x_K$ instead of the probability distribution of $x_K$.
We denote by $d_{TV}$ the total variation distance.

\begin{definition}
A distribution $\D$ on $[m]^n$ is $(k,\eps)$-wise independent
if for every $K \subseteq [n]$ of size $k$, and $x \in \D, y \in \U$, 
we have $d_{TV}(x_K,y_K) \leq \eps$.
\end{definition}

Such distributions can be generated using
seed length $O(\log\log(n) + k\log(m) + \log(1/\eps))$ when $m$ is a
power of $2$ using standard constructions \cite{NaorN93}. 
We can also assume that every co-ordinate is uniformly random in
$[m]$. See the appendix for details.

(by adding the string
$(a,a,\ldots,a)$ modulo $m$, where $a \in [m]$ is uniformly random). 

Being $(k,\eps)$-wise independent is equivalent to saying that 
for every $K \subseteq [n]$ of size $k$ and every $f:[m]^k \rgta \zo$, 
\[ \left|\E_{x \in \D}[f(x_K)] - \E_{y \in \U}[f(y_K)]\right| \leq \eps. \]
The following more general property holds.
Let $P$ be a real linear combination of combinatorial rectangles,
\[P = \sum_{S \subseteq [n]} c_S f_S ,\]
where $f_S(x) = \prod_{i \in S} f_i(x_i)$.
Let $\nm(P) = \sum_S |c_S|$. 
The degree of $P$ is the maximum size of $S$ for which $c_S \neq 0$.
It follows that if $\D$ is $(k,\eps)$-wise
independent and $P$ has degree at most $k$ then
\[ \left|\E_{x \in \D}[P(x)] - \E_{x \in \U}[P(x)]\right| \leq \nm(P)\eps. \]

\subsection{The generator}
\label{sec:gmr-def}

We use an alternate view of $\gmr$ as a collection of hash functions $g:[n] \rgta [m]$.
The generator $\gmr$ is based on iterative applications of an {\em alphabet increasing} step.
The first alphabet $m_0$ is chosen to be large enough, 
and at each step $t>1$ the size of the alphabet $m_t$ is squared $m_t = m_{t-1}^2$.
There is a constant $C > 0$ so that the following holds.
Denote by $\delta$ the error parameter of the generator.
Let $T \leq C \log \log(m)$ be the first integer so that $m_T \geq m$.
Let $\delta' = \delta/T$.

\begin{enumerate}
\item {\bf Base Case: } 
Let $m_0 \geq C \log(1/\delta)$ be a power of $2$. 
Sample $g_0: [n] \rgta [m_0]$ using a $(k_0,\eps_0)$-wise
independent distribution on $[m_0]^n$ with   
\begin{align} 
\label{eq:g0}
k_0 = C \log(1/\delta'), \  \eps_0 = \delta' \cdot m_0^{-C k_0} .
\end{align}
This requires seed length $O(\log\log(n) + \log(\log\log(m)/\delta)\log\log(\log \log(m)/\delta))$.

\item {\bf Squaring the alphabet: }
Pick $g'_t: [m_{t-1}] \times [n] \rgta [m_t]$ using a $(k_t,\eps_t)$-wise independent distribution over $[m_t]^{m_{t-1} \times n}$ with
\begin{align*} 
k_t = \max \left\{ C \frac{\log(1/\delta')}{\log(m_t)}  , 2 \right\} , \
\eps_t  \leq m_t^{-C k_t} . 
\end{align*}
Define a hash function $g_t:[n] \rgta [m_t]$ as 
$$g_t(i) = g'_t(g_{t-1}(i),i).$$
This requires seed length $O(\log\log(n) + \log(m_t) + \log(\log \log(m)/\delta))$. 
\end{enumerate}

\eat{
The main property of the generator that 
we shall prove below is: 

\begin{theorem}
Let $\gmr$ be the family of hash functions from $[n]$ to $[m]$
with error parameter $\delta > 0$. 
The seed length is at most
\[ O((\log\log(n) + \log(m/\delta)) \log\log (m/\delta)).\]
Then, for every $S_1, \ldots,S_n \subseteq [m]$,
\[ \left|\Pr_{g \in \gmr}[\intr{i \in [n]}g(i) \in S_i] - \Pr_{h \in
  \uh}[\intr{i \in [n]}h(i) \in S_i]\right| \leq \delta. \]
\end{theorem}

Combining this with Theorem \ref{thm:prg-to-hash}, we get the
following corollary for min-wise independent hashing.

\begin{corollary}
For every $\ell$, there is a family of hash functions
that is approximately $\ell$-minima-wise independent with error $\eps$ and 
seed length at most 
$$O((\log\log(n) + \log(m^\ell/\eps))(\log\log(m^\ell/\eps))).$$
\end{corollary}
}

\subsection{Analyzing the generator}

We first analyze the base case using the inclusion-exclusion approach
of \cite{EGLNV}. We need to extend their analysis to the setting where
the co-ordinates are only approximately $k$-wise independent. 

\begin{lemma}
\label{lem:base}
Let $\D$ be a $(k,\eps)$-wise independent distribution 
on $[m]^n$ with $k$ odd.
Then,
\[ \left|\Pr_{g \in \D}[\intr{i \in [n]}g(i) \in S_i] - \Pr_{h \in
  \uh}[\intr{i \in [n]}h(i) \in S_i]\right| \leq \eps  m^k + \exp(-\Omega(k)) . \]
\end{lemma}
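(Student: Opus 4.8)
The plan is to follow the inclusion-exclusion argument of \cite{EGLNV}, carefully tracking the error introduced by replacing exact $k$-wise independence with $(k,\eps)$-wise independence. First I would set $Y_i = \ind[g(i) \notin S_i]$, so that $\prod_{i \in [n]} \ind[g(i) \in S_i] = \prod_{i \in [n]}(1 - Y_i) = \sum_{j=0}^n (-1)^j S_j(Y_1,\ldots,Y_n)$. Since $k$ is odd, the Bonferroni inequalities give two-sided bounds: truncating the alternating sum after $k$ terms overshoots and undershoots alternately, and the truncation error is at most $S_k(Y_1,\ldots,Y_n)$ in absolute value, both under $\D$ and under $\uh$. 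So the quantity we want to bound splits as (i) the difference between the two truncated sums $\sum_{j=0}^k (-1)^j \E S_j(Y)$ under $\D$ versus $\uh$, plus (ii) $\E_{\D}[S_k(Y)]$, plus (iii) $\E_{\uh}[S_k(Y)]$.

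For term (i), each $S_j(Y)$ with $j \le k$ is a degree-$j$ (hence degree-$\le k$) combinatorial rectangle expansion: $S_j(Y) = \sum_{|T|=j} \prod_{i \in T}(1 - \ind[g(i)\in S_i])$, a linear combination of combinatorial rectangles. Using the $(k,\eps)$-wise independence property quoted just before this subsection, $|\E_{\D}[P] - \E_{\uh}[P]| \le \nm(P)\eps$ for any degree-$\le k$ combination $P$ of rectangles. The alternating truncated sum $P = \sum_{j=0}^k (-1)^j S_j(Y)$, when fully expanded in the monomials $\prod_{i \in T} \ind[g(i) \in S_i]$ over $|T| \le k$, has $\nm(P) \le \sum_{j=0}^k \binom{n}{j} \le n^k$ (or, being slightly more careful about the expansion of $\prod(1-\ind)$, still a bound of this shape; I would just bound it crudely by $m^k$ or $n^k$ as the statement allows $\eps m^k$). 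This gives term (i) $\le \eps \cdot m^k$ up to constants, which I would absorb into the stated $\eps m^k$.

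For terms (ii) and (iii), I need $\E_{\D}[S_k(Y)]$ and $\E_{\uh}[S_k(Y)]$ to each be $\exp(-\Omega(k))$. Under $\uh$ the $Y_i$ are independent with $\sum_i \E[Y_i] =: \mu$; if $\mu \le C' k$ for a suitable constant then the Maclaurin inequality gives $\E_{\uh}[S_k(Y)] \le (e\mu/k)^k \le (eC')^k$, which is $\exp(-\Omega(k))$ once $C'$ is small — and if $\mu > C' k$ then already $\Pr_{\uh}[\forall i\, g(i) \in S_i] \le e^{-\mu} \le e^{-C'k}$ and similarly the truncated sum is small, so one handles that case directly; the standard \cite{EGLNV} argument does exactly this dichotomy. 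Under $\D$, since $S_k(Y)$ is a degree-$k$ nonnegative rectangle combination, $\E_{\D}[S_k(Y)] \le \E_{\uh}[S_k(Y)] + \nm(S_k(Y))\eps \le \exp(-\Omega(k)) + \eps m^k$, folding back into the two error terms.

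The main obstacle I expect is bookkeeping rather than conceptual: making sure the Bonferroni two-sided bound is applied with the correct parity (hence the hypothesis ``$k$ odd''), and verifying that the $\nm$ of the relevant degree-$\le k$ rectangle combination is genuinely bounded by $m^k$ (or $n^k$) so that the claimed $\eps m^k$ term is legitimate — the expansion of $\prod_{i\in T}(1 - \ind[g(i)\in S_i])$ introduces sign patterns but no blowup in $\nm$ beyond $\binom{n}{\le k} \le n^k$. A minor subtlety is the dichotomy on $\mu = \sum_i \E[Y_i]$: when $\mu$ is large both reference probabilities are tiny but the truncated Bonferroni sum need not be, so I would instead note that the full product (not the truncation) is what we compare, and the truncation error bound $S_k(Y)$ together with Maclaurin still controls everything — this is precisely the \cite{EGLNV} calculation, which I would cite and adapt.
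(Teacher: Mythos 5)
Your proposal follows the same route as the paper: expand via inclusion--exclusion, use the Bonferroni inequalities with $k$ odd to control the truncation error by the $k$-th term, bound that term by Maclaurin, and use a dichotomy on $\mu=\sum_i q_i$. However, there is a genuine gap in how you get the error term $\eps m^k$ rather than $\eps n^k$. You bound the $\nm$-norm of the truncated sum by $\binom{n}{\le k}\le n^k$ and then say you would ``bound it crudely by $m^k$ or $n^k$ as the statement allows $\eps m^k$.'' These are not interchangeable: in the intended application $n$ is enormous compared to $m$ (the whole point is seed length with only $\log\log(n)$ dependence on $n$), so an $\eps n^k$ term would make the lemma useless and does not imply the stated $\eps m^k$. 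The missing observation is that any coordinate with $q_i=0$ has $S_i=[m]$ and can be dropped, while any coordinate with $q_i>0$ has $q_i\ge 1/m$; hence in the regime $\sum_i q_i\le k/(2e)$ the number of surviving coordinates is at most $mk/(2e)$, which turns $\binom{n'}{k}\le(en'/k)^k$ into $(m/2)^k$. Without this reduction the claimed bound is not established.

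A second, smaller issue is the large-$\mu$ case. You write that when $\mu>C'k$ ``already $\Pr_{\uh}[\forall i\,g(i)\in S_i]\le e^{-\mu}$ and similarly'' under $\D$ --- but under $\D$ the coordinates are only approximately $k$-wise independent, so you cannot directly conclude $\Pr_{\D}\le e^{-\mu}$. The paper handles this by restricting to a prefix $[n']$ with $\sum_{i\le n'}q_i\approx k/(2e)$, running the Case-1 argument on that prefix to show $\Pr_{\D}$ and $\Pr_{\uh}$ are close there, concluding both prefix probabilities are $\exp(-\Omega(k))$, and then using monotonicity $\Pr[\forall i\in[n]]\le\Pr[\forall i\in[n']]$. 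Your sketch gestures at ``adapting the \cite{EGLNV} calculation,'' which is the right instinct, but as written the step is not justified.
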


\begin{proof}
Let $p_i = |S_i|/m$, and $q_i = 1 - p_i$. Observe that 
\begin{align}
\label{eq:sum_q}
\Pr_{h}[\intr{i \in [n]}h(i) \in S_i] = \prod_{i=1}^np_i =
\prod_{i=1}^n(1 - q_i) \leq \exp \left(-\sum_{i=1}^nq_i \right).
\end{align}
We consider two cases based on $\sum_i q_i$.

{\bf Case 1: } When $\sum_i q_i \leq k/(2e)$. Since every
non-zero $q_i$ is at least $1/m$, there can be at most
$mk/(2e)$ indices $i$ so that $q_i > 0$. For $i$ so that $q_i
=0$, we have $S_i = [m]$, so we can drop such indices and assume $n \leq mk/(2e)$.
By Bonferroni inequality, since $k$ is odd,
\begin{align*}
& 
\left| \Pr_{g}[\intr{i \in [n]}g(i) \in S_i]  - \sum_{j=0}^{k-1} (-1)^j 
\sum_{J \subseteq [n]: |J| = j}\Pr_{g}[\intr{i \in J}g(i) \not\in S_i] \right|
\\ & \qquad \qquad \leq 
 \sum_{J \subseteq [n]: |J| = k} \Pr_{g}[\intr{i \in J}g(i) \not\in S_i] .
\end{align*}
A similar bound holds for $h$.
The $(k,\eps)$-wise independence thus implies
\begin{align*}
& \left|
 \Pr_{g}[\intr{i \in [n]}g(i) \in S_i]  - \Pr_{h}[\intr{i \in [n]}h(i) \in S_i]  \right| 
\\ & \qquad \qquad \leq 
\eps (en/k)^k + 2\sum_{J \subseteq [n]: |J| = k}\Pr_{h}[\intr{i \in J}h(i) \not\in S_i]  .
\end{align*}
The second term is twice $S_k(q_1,\ldots,q_n)$, which we can bound by
Maclaurin's identity as
\begin{align*} 
S_k(q_1,\ldots,q_n) \leq
\left(e/k\right)^k\left(\sum_{i=1}^nq_i\right)^k \leq 2^{-k} .
\end{align*}
Finally, since $n \leq mk/(2e)$, 
\[
\left|
 \Pr_{g}[\intr{i \in [n]}g(i) \in S_i]  - \Pr_{h}[\intr{i \in [n]}h(i) \in S_i]  - 
  S_i]
\right|  \leq \eps(m/2)^k + 2^{-k +1}.
\]

{\bf Case 2: } When $\sum_{i}q_i > k/2e$.
Once again, we drop indices $i$ so that $q_i =0$.
Consider the largest $n'$ such that 
\[ k/2e - 1 \leq \sum_{i =1}^{n'}q_i \leq k/2e.\]
Repeating the argument from Case 1 for this $n'$,
\[
\left|
 \Pr_{g}[\intr{i \in [n']}g(i) \in S_i]  - \Pr_{h}[\intr{i \in [n']}h(i) \in S_i]  \right|  \leq \eps(m/2)^k + 2^{-k+1}.
\]
Similarly to Equation~\eqref{eq:sum_q},
\[ \Pr_{h}[\intr{i \in [n']}h(i) \in S_i] \leq e^{-k/2e +1}. \]
Since
\[ \Pr_{g}[\intr{i \in [n]}g(i) \in S_i]  \leq 
\Pr_{g}[\intr{i \in[n']}g(i) \in S_i] ,\]
we have
\begin{align*}
\left|\Pr_{g}[\intr{i \in [n]}g(i) \in S_i]  \leq \Pr_{h}[\intr{i \in
    [n]}h(i) \in S_i]\right|  \leq  \eps(m/2)^k + \exp(-\Omega(k)).
\end{align*}
\end{proof}

To analyze the iterative steps, we use the
following lemma:

\begin{lemma}
\label{lem:induct}
There is $C > 0$ so that the following holds for $\delta > 0$ small enough.
Assume 
\begin{align*}  
k > 1, \ 
\ell \geq \log(1/\delta) , \ \ell \geq k , \ 
\ell^{-k} \leq \delta^{C} , \ 
\eps  \leq (m\ell)^{-C k} . 
\end{align*}
Let $\D$ be a $(k,\eps)$-wise independent distribution on $g' : [\ell] \times [n] \to [m]$ 
so that for every $(a,i) \in [\ell] \times [n]$
the distribution of $g'(a,i)$ is uniform on $[m]$.
Then,
\[ \left|\Pr_{g' \in \D, x \in [\ell]^n}[\intr{i \in [n]} g'(x_i,i) \in S_i] - \Pr_{h \in
  [m]^n}[\intr{i \in [n]}h(i) \in S_i]\right| \leq \delta. \]
\end{lemma}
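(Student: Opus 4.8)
\textbf{Proof proposal for Lemma~\ref{lem:induct}.}

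The plan is to fix the sets $S_1,\ldots,S_n \subseteq [m]$ and to think of the composed hash value $g'(x_i,i)$ in two stages: first the ``inner'' randomness $x \in [\ell]^n$ (which we pretend for now is truly uniform), and then the ``outer'' randomness $g' \in \D$. For each coordinate $i$ define the random variable
\[
W_i \;=\; \Pr_{x_i \in [\ell]}\!\big[\, g'(x_i,i) \in S_i \,\big] \;=\; \frac{1}{\ell}\sum_{a \in [\ell]} \mathbbm{1}\big[g'(a,i) \in S_i\big],
\]
which is a function of $g'$ alone. Since each $g'(a,i)$ is marginally uniform on $[m]$, we have $\E_{g' \in \D}[W_i] = p_i := |S_i|/m$ for $i$ in any set of size at most $k$ (using $(k,\eps)$-independence we get this up to an error $O(\eps)$, which I will absorb at the end). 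If the $x$'s were truly independent across coordinates, then $\Pr_{g',x}[\forall i\ g'(x_i,i)\in S_i] = \E_{g'}[\prod_i W_i]$, and we want to compare this with $\prod_i p_i$. This is exactly a ``product of bounded variables under limited independence'' statement, and the key point is that $W_i$ has \emph{small variance}: since $g'(a,i)$ for $a \in [\ell]$ are themselves $(k,\eps)$-close to being independent (as the pairs $(a,i)$ are distinct), $\Var_{g'}[W_i]$ is of order $p_i(1-p_i)/\ell + O(\eps) \le 1/\ell + O(\eps)$, so the total variance $\sigma^2 = \sum_i \Var[W_i]$ is at most $n/\ell + n\,O(\eps)$. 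Because $\eps \le (m\ell)^{-Ck}$ this is tiny, and more importantly, I will actually restrict to coordinates where $p_i$ is bounded away from $0$ (the others contribute a bounded product and are handled as in the proof of Theorem~\ref{thm:product}), so that $n$ is effectively $O(m\ell \log(1/\delta))$ and hence $\sigma^2 = O(m\log(1/\delta)/1)$ — wait, this needs the same truncation trick: drop coordinates with $p_i = 1$, and if too many coordinates have $p_i$ small then $\prod_i p_i$ is already negligible. After this reduction $\sigma^2$ is polynomially small in the relevant parameters.

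With small $\sigma^2$ in hand, I would run the Taylor-expansion argument of Section~\ref{sec:products} almost verbatim: write $W_i = p_i(1 + Z_i)$ with $\E[Z_i] = 0$ and $\sum_i \Var[Z_i] = \bar\sigma^2 \le \sigma^2/\min_i p_i \le \sigma$, expand $\prod_i W_i = \prod_i p_i \cdot \sum_\ell S_\ell(Z)$, truncate after $O(k)$ terms, and bound the tail $\sum_{\ell \ge 4k}|S_\ell(Z)|$ using Theorem~\ref{thm:main}. The truncated polynomial has degree $O(k)$ in the $Z_i$'s, but each $Z_i$ is itself a degree-$1$ function of the indicators $\mathbbm{1}[g'(a,i)\in S_i]$, so the whole truncation is a combinatorial-rectangle-type polynomial of degree $O(k)$ and bounded $\mathrm{L_1}$-norm in the variables $g'(a,i)$; therefore $(k,\eps)$-independence of $\D$ fools its expectation up to $\mathrm{L_1}$-norm times $\eps$, which is at most $(m\ell)^{O(k)}\cdot(m\ell)^{-Ck} \le \delta/4$ for $C$ large. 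The same expansion under the truly-independent distribution $\U$ gives the product $\prod_i p_i$ up to $O(\bar\sigma^k) \le \delta/4$ by Corollary~\ref{cor:k-wise}. Combining, $\E_{g'\in\D, x\in\U}[\prod_i W_i]$ is within $\delta/2$ of $\prod_i p_i = \Pr_h[\forall i\ h(i)\in S_i]$.

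The last step is to remove the assumption that the inner string $x$ is drawn from the uniform distribution $\U$ on $[\ell]^n$: in the generator $x = g_{t-1}$, which is itself only approximately $k_{t-1}$-wise independent, but this is handled exactly as in the analysis of the previous level — the quantity $\Pr_{g',x}[\forall i\ g'(x_i,i)\in S_i]$, for a \emph{fixed} $g'$, is the acceptance probability of the rectangle $\prod_i \mathbbm{1}[g'(\cdot,i)\in S_i]$ under $x$, and feeding an approximately-independent $x$ in place of a uniform one changes it by at most the error guaranteed for $x$; this is precisely the inductive structure that Section~\ref{sec:gmr} sets up, so here I only need the statement with $x \in \U$ and the telescoping over levels is done separately. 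I expect the main obstacle to be the bookkeeping in the variance reduction: one must be careful that after dropping coordinates with $p_i = 1$ and arguing that coordinates with very small $p_i$ make $\prod_i p_i$ negligible, the number of ``surviving'' coordinates is genuinely $\poly(m,\ell,\log(1/\delta))$ so that $\sigma^2$ is small enough for the hypothesis $16e\,t\bar\sigma \le 1$ of Theorem~\ref{thm:main} to hold with $t$ large enough ($t = \bar\sigma^{-1/2}$, say) to make $t^{-2k}$ and $(6et\bar\sigma)^k$ both $\le \delta/4$; this is where the precise relation $\ell^{-k} \le \delta^C$ and $\ell \ge \log(1/\delta)$ gets used, and matching all the constants $C$ across the three error contributions is the fiddly part.
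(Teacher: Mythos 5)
Your proposal follows essentially the same route as the paper's proof: the same two-stage view with $W_i = \frac{1}{\ell}\sum_a \ind(g'(a,i)\in S_i)$, the same head/tail split on the size of $p_i$, the same expansion $\prod_i p_i(1+Z_i) = \prod_i p_i\sum_j S_j(Z)$ truncated at degree $O(k)$ with Theorem~\ref{thm:main} controlling the tail and $(k,\eps)$-independence fooling the truncated low-$\nm$-norm rectangle combination. The only quibble is the normalization $\bar\sigma^2 \le \sigma^2/\min_i p_i$ (it should be $\sum_i q_i/(\ell p_i)$, controlled via $p_i \ge \ell^{-0.1}$ on the tail as in the paper), but this does not affect the argument since either way the total variance is polynomially small in $\ell$.
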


\begin{proof}
Given $g',x$, let $g : [n] \to [m]$ be defined by $g(i) = g'(x_i,i)$.
We can similarly pick $h$ in two steps: 
pick $h':[\ell]\times [n] \rgta [m]$ uniformly at random, pick $x \in [\ell]^n$ 
independently and uniformly at random, and then let $h(i) = h'(x_i,i)$. 

For every $i \in [n]$,
since each $x_i$ is uniform over $[\ell]$, 
for every fixed $g'$, we have
\begin{align*}
\Pr_x[g(i) \in S_i] = \frac{1}{\ell}\sum_{a=1}^\ell
\mathbbm{1}(g'(a,i) \in S_i) .
\end{align*}
So, for every fixed $g'$,
\begin{align}
\label{eq:ind_x}
\Pr_x[\intr{i \in n}g(i)\in S_i] = \prod_{i=1}^n \Pr_x[g(i) \in S_i] = \prod_{i=1}^n\frac{1}{\ell}\sum_{a=1}^\ell \mathbbm{1}(g'(a,i) \in S_i) .
\end{align}
A similar equation holds for $h$.

Let $p_i =|S_i|/m$ and $q_i = 1- p_i$. 
Partition $[n]$ into a head
$H  = \{i : p_i < \ell^{-0.1}\}$ and a tail $T = \{i : p_i \geq \ell^{-0.1}\}$. 
Standard arguments (see e.g.~\cite[Theorem 4.1]{GMRTV})
imply that if $(k,\eps)$-wise independence fools both $\intr{i \in H}g(i) \in S_i$
and $\intr{i \in T}g(i) \in S_i$ with error $\delta$ then
$(O(k),\eps^{O(1)})$-wise independence fools their intersection with
error $O(\delta)$. 
So it suffices to consider each of them separately.

\paragraph{Fooling the Head:} If $|H| \leq k$, 
\begin{align*}
\Pr_x[\intr{i \in H}g(i) \in S_i] = \prod_{i \in H}\frac{1}{\ell}\sum_{a=1}^\ell \mathbbm{1}(g'(a,i) \in
S_i)
\end{align*}
is a degree $k$ polynomial with $\nm$-norm bounded by $1$. Hence,
\begin{align}
\label{eq:head}
\left| \E_{g'}[\Pr_x[\intr{i \in H} g(i) \in S_i]] -
\E_{h'}[\Pr_x[\intr{i \in H} h(i) \in S_i]] \right| \leq \eps \leq \delta.
\end{align}

If $|H| \geq k$, we show that the probabilities are small
which means that they are close.
Indeed, let $H'$ be the first $k$ indices in $H$.
First,
\begin{align*}
\Pr_{h}[\intr{i \in H'} h(i) \in S_i] = 
\prod_{i \in H'} \Pr[h(i) \in S_i] \leq \ell^{-0.1k} \leq \delta .
\end{align*}
Second, Equation~\eqref{eq:head} implies
\begin{align*}
\Pr_{g}[\intr{i \in H'}g(i) \in S_i]  \leq \ell^{-0.1k}  + \eps \leq \delta .
\end{align*}

\paragraph{Fooling the Tail:}
We may assume that $q_i  \geq 1/m$ and $p_i > 0$ for all $i \in T$, 
since otherwise $S_i$ is trivial and we can drop such an index.
As in the proof of Lemma~\ref{lem:base}, by restricting to
a subset if necessary, we can also assume that 
\begin{align}
\label{eq:q}
\sum_{i \in T}q_i \leq C \log(1/\delta) .
\end{align}
For simplicity of notation, we denote $|T|$ by $n$. 
Therefore, $n \leq C m\log(1/\delta)$.

Let
$$Y(a,i) = \mathbbm{1}(g'(a,i)) - p_i .$$ 
Since $g'(a,i)$ is uniform over $S_i$, 
$$\E[Y(a,i)] = 0, \ \Var[Y(a,i)] = q_ip_i.$$
Write
\begin{align*}
\Pr_x[\intr{i \in T}g(i) \in S_i] = \prod_{i \in T} \left(p_i +
\frac{1}{\ell}\sum_{a=1}^\ell Y(a,i)\right)  .
\end{align*}
Define new random variables
\[ A_i = \frac{1}{\ell p_i}\sum_{a=1}^\ell Y(a,i).\]
so that
\begin{align}
\label{eq:tail2}
\Pr_x[\intr{i \in T} g(i) \in S_i] = \prod_{i=1}^n p_i (1 +
A_i) = \prod_{i=1}^n p_i \cdot \left(\sum_{i=0}^{n}S_i(A_1,\ldots,A_n)\right) .
\end{align}
For $k \leq n$, define 
\begin{align*} 
P_k(A)  = \prod_{i = 1}^n p_i \cdot \left(\sum_{i=0}^k S_i(A_1,\ldots,A_n)\right) . 
\end{align*}
We will show that $P_k(A)$ is a good approximation to $P_n(A)$ under 
$(O(k),\eps^{O(1)})$-wise independence, hence under both $\D$ and $\U$.

\begin{claim}
\label{clm:PnWUandD}
Both $|\E_{\D}[P_n(A) - P_k(A)]|$ and $|\E_{\U}[P_n(A) - P_k(A)]|$
are at most $O(\ell^{-0.2k})$.
\end{claim}

The claim completes the proof:
\begin{align*}
|\E_\D[P_n(A)] - \E_\U[P_n(A)]| & \leq |\E_\D[P_n(A)] - \E_\D[P_k(A)]|  +
|\E_\D[P_k(A)] - \E_\U[P_k(A)]| \\
& \phantom{=}  + |\E_\U[P_n(A)] - \E_\U[P_k(A)]| .
\end{align*} 
Bound the first and third terms by $O(\ell^{-0.2k})$ using the claim. 
Bound the second term as follows.
Since $k > 1$, for all $i$,
\begin{align*} 
\Var[A_i] & = \frac{1}{\ell^2p^2_i}\sum_{a=1}^\ell\Var[Y(a,i)] = 
\frac{q_i}{\ell p_i} \leq  \frac{q_i}{\ell^{0.9}} , \\
\nm(A_i) & \leq \frac{1}{\ell p_i}\sum_{a=1}^{\ell}\nm(Y(a,i)) \leq
\frac{2}{p_i} \leq \ell.
\end{align*}
Plugging in the bounds from Equations \eqref{eq:q}:
\begin{align*} 
\sum_{i=1}^n \Var[A_i] & \leq  
\frac{C \log(1/\delta)}{\ell^{0.9}} \leq \frac{1}{\ell^{0.6}}, \\
\sum_{i=1}^n \nm(A_i) & \leq C m \log(1/\delta) \ell \leq m \ell^{O(1)} , \\
\nm(S_k(A_1,\ldots,A_n)) & \leq \left(\sum_{i=1}^n \nm(A_i)\right)^k \leq m^k \ell^{O(k)} . 
\end{align*}
Thus,
\begin{align*}
|\E_\D[P_k(A)] - \E_\U[P_k(A)]| \leq \eps\nm(P_k) \leq \eps\ell^{O(k)}
= O(\ell^{-k}) .
\end{align*}
Overall,
\begin{align*}
  \Big|\Pr_g[\intr{i \in T}g(i) \in S_i] & - \Pr_h[\intr{i \in T}h(i)
  \in S_i] \Big| \\
  & = | \E_\D[P_n(A)] - \E_\U[P_n(A)]| \leq O(\ell^{-0.2k}) 
\leq \delta .
\end{align*}

\begin{proof}[Proof of Claim~\ref{clm:PnWUandD}]
We argue for $\D$, the same argument holds for $\U$.
Write
\begin{align*} 
|P_n(A) - P_k(A)| \leq \prod_{i=1}^n p_i
\cdot  \left|\sum_{i=k+1}^nS_i(A_1,\ldots,A_n)\right|.
\end{align*}
If  $A_1,\ldots,A_n$ are $(O(k),0)$-wise independent, then,
by Lemma \ref{lem:exp-kwise},
\[ \E[S_k(A_1,\ldots,A_n)^2] = \frac{(\sum_i\Var[A_i])^{k}}{k!} \leq \frac{\ell^{-0.6 k}}{k!} . \]
Hence, under $(O(k),\eps)$-wise independence,
\begin{align}
\label{eq:exp-bound}
\E[S_k(A_1,\ldots,A_n)^2] \leq \frac{\ell^{-0.6 k}}{k!} +
\eps\nm(S_k) \leq \frac{\ell^{-0.5 k}}{k!} .
\end{align}
We now repeat the proof of Lemma \ref{lem:main} with $\sigma^2 =
\ell^{-0.5}$ and $t = \ell^{0.2}$.
The event $G$ defined as
\begin{align*}
G = \left\{ |S_{k}(A)| \leq \frac{\ell^{-0.05 k}}{\sqrt{k!}} \ \text{and}
\ |S_{k+1}(A)| \leq \frac{\ell^{-0.05(k+1)}}{\sqrt{(k+1)!}}\right\} 
\end{align*}
occurs with probability at least $1-2\ell^{-0.4k}$. 
As in the proof of Theorem~\ref{thm:main}, conditioned on $G$,
\begin{align}
|P_n(A) - P_k(A)| \leq \sum_{i=k}^n |S_i(A)| & \leq 2 (6e\ell^{-0.05})^k. 
\label{eq:tail-sum}
\end{align}
Since $\E[A_i] =0$ for all $i$
and $\nm(S_k) \leq \ell^{O(k)}$, 
by Equation~\eqref{eq:exp-bound},
it follows that under $(O(k),\eps)$-wise independence,
\begin{align}
\label{eq:pk2}
\E[P_k(A_1,\ldots,A_n)^2 ] & \leq \sum_{i=0}^k\E[S_i(A_1,\ldots,A_n)^2] + \eps \ell^{O(k)} = O(1).
\end{align}

Denote by $\neg G$ the complement of $G$.
Write
\begin{align*} 
\E[P_n(A) - P_k(A)] = \E[(P_n(A) - P_k(A))\mathbbm{1}(G)] +
\E[(P_n(A) - P_k(A))\mathbbm{1}(\neg G)] .
\end{align*}
Equation \eqref{eq:tail-sum} implies
\begin{align*}
|\E[(P_n(A) - P_k(A))\mathbbm{1}(G)]| \leq 2(20\ell^{-0.25})^k .
\end{align*}

It remains to bound the second term.
Bound
\begin{align*}
|\E[(P_n(A) - P_k(A))\mathbbm{1}(\neg G)]| 
\leq |\E[P_n(A)\mathbbm{1}(\neg G)]| + |\E[P_k(A)\mathbbm{1}(\neg G)]|
\end{align*}
Note that $0 \leq P_n(A) \leq 1$ since it is the probability of an event.
Also note that $\E[P_k(A)^2] = O(1)$ by Equation \eqref{eq:pk2}. 
So we can bound the RHS using Holder's inequalities by
\begin{align*}
|\E[\mathbbm{1}(\neg G)]| + |\E[P_k(A)^2]^{1/2} \cdot \E[\mathbbm{1}(\neg
  G)]^{1/2}| \leq O((\E[\mathbbm{1}(\neg G)])^{1/2}) = O(\ell^{-0.2k}).
\end{align*}
\end{proof}
\end{proof}

We are ready to prove the main theorem of this section.

\begin{proof}
The proof uses an hybrid argument.
The $\gmr$ generator chooses $g_0:[n] \rgta [m_0]$, and then
$g'_1,\ldots, g'_T$ where $g'_t = [m_{t-1}] \times [n] \rgta [m_t]$
has error $\delta' = \delta/T$ and defines 
$$g_t(i) = g'_t(g_{t-1}(i),i).$$ 
Let $h_0, h'_1,\ldots, h'_t$ be truly random hash 
functions with similar domains and ranges.
For $0 \leq t,l \leq T$, define the hybrid family $\G^l_t =\{f^l_t:[m] \rgta [n]\}$
as follows: for $t=0$ and every $l$,
$$f^l_0 = h_0,$$ 
and for $t > 0$ and every $l$,
\[ f^l_t(i) = 
\begin{cases} 
g'_t(f^l_{t-1}(i),i) & \text{for} \  l < t  , \\
h'_t(f^l_{t-1}(i),i) & \text{for} \  t \leq l .
\end{cases}
\]
For every $l$, let $\G^l = \G^l_T$.
Thus, $\G^0 = \gmr$ and $\G^T = \uh$.
We will show by induction on $l \geq 1$ that
\[ \left|\Pr_{f^l \in \G^l}[\intr{i \in [n]} f^l(i) \in S_i] - 
\Pr_{f^{l-1} \in \G^{l-1}}[\intr{i \in [n]} f^{l-1}(i) \in S_i]\right| \leq \delta' . \]
The desired bound then follows by the triangle inequality. 

In the base case when $l=1$, couple $\G^0$ and $\G^1$ by
picking the same $g'_1,\ldots,g'_T$, and use them to define the
function $f':[m_1] \times [n] \rgta [m]$ so that
\[ f^0(i) = f'(g_0(i),i), \  f^1(i) = f'(h_0(i),i).\]
For $i \in [n]$, define 
$$S'_i = \{a\in [m_1]: f'(a,i) \in S_i \}.$$ 
Thus,
\begin{align*} 
\Big|\Pr_{f^1 \in \G^1}[\intr{i \in [n]}f^1(i) \in S_i] & - 
\Pr_{f^0 \in \G^0}[\intr{i \in [n]}f^0 (i) \in S_i]\Big| \\
&  = 
\left|\Pr_{h_0}[\intr{i \in [n]} h_0(i) \in S'_i] - 
\Pr_{g_0}[\intr{i \in [n]} g_0(i) \in S'_i]\right|  \leq \delta' , 
\end{align*}
by applying Lemma~\ref{lem:base} with 
$k =O(\log(1/\delta'))$ and $\eps = \delta' \cdot m_0^{-O(k)}$. 

For the inductive case $l > 1$, couple $\G^l$ and $\G^{l-1}$ by
picking the same $g'_{l+1},\ldots,g'_T$, 
and pick $x \in [m_{l-1}]^n$ uniformly at random.
There is a function $f':[m_l] \times [n] \rgta [m]$ so that
\[ f^l(i) = f'(h'_l(x_i,i),i), \  f^{l-1}(i) = f'(g'_l(x_i,i),i).\]
As before, define 
$$S_i = \{a \in [m_l]: f'(a,i) \in S_i\}.$$ 
Hence,
\begin{align*} 
\Big|\Pr_{f^l \in \G^l}[\intr{i \in [n]} f^l(i) \in S_i] & - 
\Pr_{f^{l-1} \in \G^{l-1}}[\intr{i \in [n]} f^{l-1}(i) \in S_i] \Big| \\
&  =  \Big|\Pr_{h_l,x}[\intr{i \in [n]} h'_l(x_i,i) \in S'_i] - 
\Pr_{g_l,x}[\intr{i \in [n]} g'_l(x_i,i) \in S'_i]\Big| \leq \delta ,
\end{align*}
by Lemma~\ref{lem:induct} with 
\begin{align*}  
k_{l-1} > 1, \
m_{l-1} \geq \log(1/\delta')^{C}, \ m_{l-1} \geq k_{l-1} , \ 
m_{l-1}^{-k} \leq {\delta'}^{C} , \ 
\eps_{l-1}  \leq (m_l m_{l-1})^{-C k} . 
\end{align*}
\end{proof}

\section*{Acknowledgements}

We thank Nati Linial, Raghu Meka, Yuval Peres, Dan Spielman, Avi Wigderson and David Zuckerman for
helpful discussions. We thank an anonymous referee for pointing out an error in 
the statement of Theorem \ref{thm:main} in a previous version of the paper.

\bibliographystyle{alpha}
\bibliography{references}

\appendix
\section{Missing Proofs}
\label{app:proof}

We give the proof of Fact \ref{fact:2} which states that over the reals, if $S_k(a) = S_{k+1}(a) = 0$ for $k > 0$
then $S_\ell(a) =0$ for all $\ell \geq k$.

For a univariate polynomial $p(\xi)$ and a root $y \in \R$ of $p$, 
denote by $\mult(p,y)$ the multiplicity of the root $y$ in $p$. 
We use the following property of polynomials $p(\xi)$
with real roots \cite{PS}, which can be proved using the interlacing of
the zeroes of $p(\xi)$ and $p'(\xi)$: If $\mult(p',y) \geq2$ 
then $\mult(p,y) \geq \mult(p',y) + 1$.

\begin{proof}[Proof of Fact \ref{fact:2}]
Let 
\[p(\xi) = \prod_{i \in [n]} (\xi + b_i) = \sum_{k=0}^n \xi^k
S_{n-k}(b).\]
Consider $p^{(n-k -1)}(\xi)$ which is the $(n - k-1)^{th}$ derivative of
$p(\xi)$. Since $S_k(b) = S_{k+1}(b) = 0$ for $k > 0$, 
it follows that $\xi^2$ divides $p^{(n-k-1)}(\xi)$ and hence 
$\mult(p^{(n-k-1)},0) \geq 2$. 
Applying the above fact $n - k-1$ times, we get $\mult(p,0) \geq n -k +1$
so $S_n(b) = \ldots = S_{k}(b) = 0$.
%
\end{proof}

The next Theorem is a routine extension of the result of \cite{SaksSZZ99} to large $\ell$.

\begin{theorem}\cite{SaksSZZ99}
\label{thm:prg-to-hash}
Let $\gcr:\zo^r \rgta [m]^n$ be a $\prg$ for combinatorial
rectangles with error $\eps$. The resulting family
$\{g_y : y \in \zo^r \}$ of hash functions is approximately $\ell$-minima-wise
independent with error at most $\eps {m \choose \ell}$.
\end{theorem}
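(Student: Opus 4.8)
The plan is to reduce the $\ell$-minima-wise independence condition to a bound involving only $\binom{m}{\ell}$ combinatorial rectangle tests, each controlled directly by the PRG guarantee. First I would fix $S \subseteq [n]$ and a sequence $T = (t_1,\ldots,t_\ell)$ of $\ell$ distinct elements of $S$, write $\hh$ for the family $\{g_y : y \in \zo^r\}$, and let $E$ denote the event $\{h(t_1) < \cdots < h(t_\ell) < \min h(S \setminus T)\}$ whose probabilities under $\hh$ and under $\uh$ we must compare. The key step is to decompose $E$ according to the exact values attained at the coordinates $t_1,\ldots,t_\ell$: for each strictly increasing tuple $v = (v_1,\ldots,v_\ell) \in [m]^\ell$ (i.e.\ $v_1 < \cdots < v_\ell$), let $E_v$ be the event that $h(t_j) = v_j$ for all $j \in [\ell]$ and $h(s) > v_\ell$ for every $s \in S \setminus T$. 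Since $t_1,\ldots,t_\ell$ are distinct, $E$ is the disjoint union of the events $E_v$ over the $\binom{m}{\ell}$ strictly increasing tuples, and this decomposition holds under any distribution on $h$.

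Next I would observe that each $E_v$ is a combinatorial rectangle. Define $f^v : [m]^n \rgta \zo$ by $f^v(x_1,\ldots,x_n) = \prod_{i \in [n]} f^v_i(x_i)$ with $f^v_{t_j}(x) = \ind(x = v_j)$ for $j \in [\ell]$, $f^v_i(x) = \ind(x > v_\ell)$ for $i \in S \setminus T$, and $f^v_i \equiv 1$ for $i \notin S$ (when $S \setminus T = \emptyset$ the middle group of factors is simply absent). Then for a hash function $h$ identified with $(h(1),\ldots,h(n)) \in [m]^n$ we have $\ind(h \in E_v) = f^v(h(1),\ldots,h(n))$, so $\Pr[h \in E_v] = \E[f^v]$ under the relevant distribution. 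Because $g_y(i)$ is the $i$-th coordinate of $\gcr(y)$, evaluating $f^v$ on the generator output coincides with evaluating it on $\hh$, and evaluating it on the uniform distribution over $[m]^n$ coincides with evaluating it on $\uh$.

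Finally I would apply the $\prg$ error bound to each $f^v$ and sum: for every $v$, $\left| \Pr_{h \in \hh}[h \in E_v] - \Pr_{h \in \uh}[h \in E_v] \right| = \left| \E_{x \in \zo^r}[f^v(\gcr(x))] - \E_{x \in [m]^n}[f^v(x)] \right| \le \eps$, and summing over the $\binom{m}{\ell}$ tuples together with the disjoint decomposition of $E$ and the triangle inequality gives $\left| \Pr_{h \in \hh}[h \in E] - \Pr_{h \in \uh}[h \in E] \right| \le \binom{m}{\ell}\eps$. I do not expect a genuine obstacle here: the only points needing care are setting up the disjoint decomposition correctly (strict inequalities, distinctness of the $t_j$, and the vacuous case $S = T$) and checking that each $\ind(h \in E_v)$ is literally a product of coordinate functions so that the rectangle PRG applies verbatim. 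This is precisely why extending the $\ell = 1$ argument of \cite{SaksSZZ99} to general $\ell$ is routine.
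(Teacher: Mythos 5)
Your proof is correct and follows essentially the same route as the paper: decompose the ordering event into $\binom{m}{\ell}$ disjoint events indexed by the increasing tuple of values taken on $t_1,\ldots,t_\ell$, observe each is a combinatorial rectangle, and apply the PRG guarantee to each before summing. No substantive differences.
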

\begin{proof}[Proof of Theorem \ref{thm:prg-to-hash}]
Fix $S\subseteq [n]$ and a sequence $T= (t_1,\ldots,t_\ell)$ of $\ell$
distinct elements from $S$. The event 
\[ g(t_1) < \cdots < g(t_\ell) < \min g(S\setminus T) \]
can be viewed as the disjoint union of ${m \choose \ell}$ events by
fixing the set $A = \{a_1 < \ldots < a_\ell\}$ that $T$ maps to. 
The indicator $\mathbbm{1}_A$ of the event
\[ g(t_1) = a_1, \ldots, \ g(t_\ell) = a_\ell, \ g(S\setminus T) > a_\ell \]
is a combinatorial rectangle: 
Define
\begin{align*} 
f_i(x_i) & = 1 \ \text{for} \ i \not\in S\\
f_i(x_i) & = \mathbbm{1}(x_i = a_j) \ \text{for} \ i = t_j \in T \\
f_i(x_i) & = \mathbbm{1}(x_i > a_\ell) \ \text{for} \ i \in S\setminus T
\end{align*}
and
\begin{align*}
f_A(x_1,\ldots,x_n) & = \andop{i \in [n]}f_i(x_i) .
\end{align*}
Since $g(i) = x_i$, it follows that $\mathbbm{1}_A(g) =
f_A(x)$. Further, choosing $h \in \uh$ is equivalent to choosing $x \in
[m]^n$ uniformly at random.  Hence, 
\begin{align*}
\Pr_{g \in \gcr} [g(t_1) < & \cdots < g(t_\ell) < \min g(S\setminus T)] \\
& = \sum_{A}\E_{y \in \zo^r}[f_A(\gcr(y))]\\
& = \sum_{A}(\E_{h \in \uh}[\mathbbm{1}_A(h)] \pm \eps)\\
& = \Pr_{h \in \uh}[h(t_1) < \cdots < h(t_\ell) < \min h(S\setminus
  T)] \pm {m \choose \ell}\eps .
\end{align*}
\end{proof}

Finally we discuss how to generate the $(k,\eps)$-wise independent distributions on $[m]^n$
with seed length $O(\log\log(n) + k\log(m) + \log(1/\eps))$. We claim
that it suffices to take a $k' = k\log(m)$-wise $\eps$-independent string
of length $n' = n\log(m)$. Naor and Naor \cite{NaorN93}
showed that such distributions can be generated using seed-length
$O(\log\log(n) + k\log(m) + \log(1/\eps))$. 
We can also assume that every co-ordinate is uniformly random in
$[m]$ by adding the string $(a,\ldots,a)$ where $a \in [m]$ is chosen randomly.

%
%
%
%


\end{document}